\documentclass{article}

 \usepackage[preprint]{neurips_2025}

\usepackage[utf8]{inputenc} 
\usepackage[T1]{fontenc}    
\usepackage{hyperref}       
\usepackage{url}            
\usepackage{booktabs}       
\usepackage{amsfonts}       
\usepackage{nicefrac}       
\usepackage{microtype}      
\usepackage{xcolor}         
\usepackage{enumitem}
\usepackage{amsthm}
\usepackage{amsmath}
\usepackage{graphicx}
\usepackage{subcaption}
\usepackage{multirow} 
\usepackage{adjustbox} 
\newtheorem{theorem}{Theorem}[section]        

\newtheorem{proposition}[theorem]{Proposition}

\title{Logic-Gated Time-Shared Feedforward Networks for Alternating Finite Automata: Exact Simulation and Learnability}

\author{%
  Sahil Rajesh Dhayalkar \\
  Arizona State University\\
  \texttt{sdhayalk@asu.edu} \\
}

\begin{document}
\theoremstyle{definition}
\newtheorem{definition}[theorem]{Definition}
\newtheorem{remark}[theorem]{Remark}

\maketitle


\begin{abstract}
We present a formal and constructive framework for simulating Alternating Finite Automata (AFAs) using Logic-Gated Time-Shared Feedforward Networks (LG-TS-FFNs). Unlike prior neural automata models limited to Nondeterministic Finite Automata (NFAs) and existential reachability, our architecture integrates learnable, state-dependent biases that function as differentiable logic gates, enabling the representation of both Existential \textsc{\textsc{OR}} and Universal \textsc{\textsc{AND}} aggregation within a shared-parameter linear recurrence. We prove that this architectural modification upgrades the network's computational class to be structurally isomorphic to AFAs, thereby inheriting their exponential succinctness: the network can represent regular languages requiring $2^n$ states in an NFA with only $n$ neurons. We rigorously establish that the forward pass of an LG-TS-FFN exactly simulates the reachability dynamics of an AFA, including instantaneous $\varepsilon$-closures. Furthermore, we demonstrate empirical learnability: a continuous relaxation of the logic gates allows the network to simultaneously recover the automaton's topology and logical semantics from binary labels via standard gradient descent. Extensive experiments confirm that our model achieves perfect recovery of ground-truth automata, bridging the gap between statistical learning and succinct, universal logical reasoning.
\end{abstract}

\textbf{Keywords:} Alternating Finite Automata simulation, Automata theory, feedforward neural networks, Gradient-Based Learnability of Automata, interpretable models, learnability, Logic-Gated Time-Shared Depth-Unrolled Feedforward Networks, matrix-vector products, symbolic computation.

\section{Introduction}
\label{sec:introduction}
The remarkable empirical success of deep learning has been driven largely by the scalability of architectures such as Transformers and Recurrent Neural Networks (RNNs) on massive datasets. Despite these advances, a fundamental tension remains between the statistical nature of neural networks and the discrete, symbolic nature of formal languages. While neural networks excel at fuzzy pattern matching and generalization, they often struggle with tasks requiring strict logical adherence, such as unbounded counting, hierarchical recursion, or exact boolean reasoning. This dichotomy has fueled a resurgence of interest in understanding the theoretical expressivity of neural models through the lens of Automata Theory.

Historically, this inquiry has focused on two primary classes of automata: Deterministic Finite Automata (DFAs) and Nondeterministic Finite Automata (NFAs). Early work by \cite{OMLIN199641} established that RNNs can simulate DFAs, but extracting these automata from trained networks often leads to an explosion in state space complexity \cite{hopcroft2001introduction}. More recently, the Time-Shared Feedforward Networks (TS-FFNs) demonstrated that it can efficiently simulate NFAs, DFAs and Probabilistic Finite Automata by exploiting the high-dimensional superposition of state vectors \cite{dhayalkar2025nfa, dhayalkar2025neuralnetworksuniversalfinitestate,dhayalkar2025symbolicfeedforwardnetworksprobabilistic}. This finite automata-neural correspondence allows networks to represent existential logic ($\exists$), which is the ability to verify if at least one valid path exists, with exponential succinctness compared to DFAs.

However, a critical gap remains. Classical automata theory tells us that succinctness does not stop at nondeterminism. Alternating Finite Automata (AFAs)~\cite{Chandra1981} generalize NFAs by introducing Universal ($\forall$) branching, allowing states to enforce that all outgoing paths must lead to acceptance. This addition of conjunctive logic makes AFAs exponentially more succinct than NFAs and doubly exponentially more succinct than DFAs \cite{Fellah01011990}, providing a powerful formalism for modeling complex dependencies such as synchronization, safety constraints, and logical entailment.

In this paper, we bridge this gap by introducing the Logic-Gated Time-Shared Feedforward Network (LG-TS-FFN). We prove that by equipping standard linear layers with a learnable, state-dependent bias term, we can transform neurons into differentiable logic gates capable of switching between existential (\textsc{OR}) and universal (\textsc{AND}) aggregation. This seemingly simple modification has profound theoretical implications: it upgrades the computational class of the network from NFA-equivalent to AFA-equivalent, enabling it to represent regular languages with optimal succinctness.

Our contributions are threefold:
\begin{enumerate}
    \item Theoretical Equivalence: We prove that Logic-Gated TS-FFNs are structurally isomorphic to Alternating Finite Automata (Theorem \ref{THM:EQUIVALENCE}). We show that the forward pass of the network ($t=0 \to L$) exactly simulates the AFA's reachability dynamics, including the instantaneous propagation of logic via $\varepsilon$-closures (Theorem \ref{THM:AFA_SIMULATION}).
    \item Exponential Succinctness: We demonstrate that our architecture inherits the succinctness properties of AFAs. By learning to represent boolean intersections directly within the linear recurrence, the network can model patterns requiring $2^n$ states in an NFA using only $n$ neurons (Proposition \ref{PROP:SUCCINCTNESS}).
    \item Differentiable Learnability: We propose a continuous relaxation of the logic gates that allows the network to be trained end-to-end via gradient descent. We show that the network can simultaneously learn the topology of the automaton and the logical semantics of its states (\textsc{AND} vs. \textsc{OR}) from binary labels, achieving perfect recovery of complex ground-truth automata in controlled experiments (Proposition \ref{PROP:LEARNABILITY}).
\end{enumerate}

By establishing this exact correspondence, we provide a rigorous theoretical grounding for neuro-symbolic reasoning within standard deep learning primitives. Our results suggest that the dense connections of a neural layer are not merely feature extractors, but can function as a dynamic, differentiable boolean circuit, offering a path toward models that combine the learnability of neural networks with the succinctness and interpretability of formal logic.

\section{Related Work}
\label{sec:related_work}

The synthesis of neural networks and automata theory constitutes a foundational pillar of neuro-symbolic computing, tracing its lineage to the seminal demonstration by~\cite{McCulloch1943} that threshold networks are computationally equivalent to finite state machines. Our work advances this tradition by establishing a rigorous structural equivalence between linear-threshold networks and Alternating Finite Automata (AFAs), addressing critical questions of succinctness, logical expressivity, and learnability that remain open in the current literature.

\paragraph{Neural Representation of Formal Languages}
A substantial body of theoretical work investigates the computational power of Recurrent Neural Networks (RNNs). Siegelmann and Sontag \cite{SIEGELMANN199177} famously established that RNNs with rational weights are Turing-complete, while those with finite precision collapse to the class of Regular Languages. This sparked decades of research into extracting Deterministic Finite Automata (DFAs) from trained RNNs using clustering, state-merging, and query-based learning techniques \cite{OMLIN199641,Giles1992,TINO1998171,pmlr-v80-weiss18a}. However, these extraction methods fundamentally rely on the DFA formalism, which suffers from combinatorial state explosion when modeling succinct patterns \cite{hopcroft2001introduction}. More recently, the theoretical focus has shifted to modern architectures; studies have shown that Transformers with hard attention can recognize certain non-regular languages \cite{bhattamishra-etal-2020-ability,Hahn_2020}, while those with soft attention are strictly limited in their capacity to model state-based hierarchy \cite{merrill-2019-sequential,merrill-etal-2020-formal}. Similarly, the resurgence of State-Space Models (SSMs) has renewed interest in continuous-time representations of automata \cite{Gu2021,Dao2022}. Our work diverges from these asymptotic analyses by providing an exact, constructive mapping for AFAs, a class that remains largely unexplored in the context of deep learning despite its dominance in formal verification.

\paragraph{Nondeterminism and Succinctness}
The primary motivation for moving beyond DFAs is succinctness. It is a classical result that Nondeterministic Finite Automata (NFAs) can be exponentially smaller than their equivalent DFAs \cite{Rabin1959}. Early neural approaches attempted to capture this nondeterminism using high-order connections or specialized winner-take-all activations \cite{Casey1996,243123}. This connection is also formalized by proving that Time-Shared Feedforward Networks (TS-FFNs) introduced in~\cite{dhayalkar2025nfa} can efficiently simulate NFAs by exploiting the superposition property of high-dimensional state vectors~\cite{dhayalkar2025nfa,dhayalkar2025neuralnetworksuniversalfinitestate}. While these architectures successfully encoded Existential ($\exists$) logic (thus representing the reachability aspect of nondeterminism), they remained structurally incapable of representing Universal ($\forall$) logic (safety or universality) without an exponential blowup in width. This limitation is significant because AFAs are exponentially more succinct than NFAs and doubly exponentially more succinct than DFAs \cite{Chandra1981,Kozen1976, Fellah01011990}. By introducing learnable logic gates, our current architecture bridges this gap, enabling the network to represent the full boolean spectrum ($\exists$ and $\forall$) directly within the linear recurrence.

\paragraph{Differentiable Logic and Structure Learning}
Our method of learning logic gates via continuous relaxation aligns with the broader field of differentiable logic and neuro-symbolic integration. Frameworks such as Differentiable Inductive Logic Programming~\cite{Evans2018}, Neural Logic Machines \cite{dong2019neurallogicmachines}, and Logic Tensor Networks \cite{serafini2016logictensornetworksdeep,Donadello2017} have demonstrated that logical reasoning can be integrated into gradient-based learning. However, these methods often require specialized, heavy-weight architectures or complex auxiliary losses to enforce consistency. In contrast, our approach relies on the intrinsic geometric properties of standard linear layers. By interpreting the bias term as a sliding threshold between union and intersection, we effectively perform continuous structure learning over the discrete space of boolean circuits \cite{bengio2013estimatingpropagatinggradientsstochastic,petersen2021differentiablesortingnetworksscalable}, similar to relaxation techniques used in Neural Architecture Search (NAS) \cite{liu2018darts,xie2018snas}. This allows us to recover interpretable automata from data without the need for discrete search heuristics.

\paragraph{Automata in Formal Verification}
Finally, our work has implications for neural-guided formal verification. Alternating Automata are the standard formalism for model checking, serving as the operational substrate for Monadic Second-Order Logic (MSO) and Temporal Logics (LTL/CTL) \cite{Vardi1996,Muller1987,Kupferman2000}. While symbolic algorithms for AFA operations (e.g., emptiness, containment) are well-established, they typically rely on symbolic data structures like Binary Decision Diagrams (BDDs) that do not scale well with high-dimensional data. By embedding AFA logic into a differentiable network, our architecture offers a potential pathway for neural model checking, where the network learns to verify properties defined by succinct logical specifications \cite{Wang2018,selsam2018learning, amizadeh2020pdp}. This connects our theoretical results to practical applications in program synthesis and safety verification, where succinct logical representations are paramount.

\section{Preliminaries and Formal Definitions}
\label{sec:preliminaries}

In this section, we formally define the key mathematical concepts used throughout the paper.

\begin{definition}[Alternating Finite Automaton]
\label{DEF:AFA}
An Alternating Finite Automaton (AFA) is a tuple $\mathcal{A} = (Q, \Sigma, \delta, q_0, g, F)$ where:
\begin{itemize}
    \item $Q = \{q_1, \dots, q_n\}$ is a finite set of states.
    \item $\Sigma$ is the finite input alphabet.
    \item $\delta: Q \times (\Sigma \cup \{\varepsilon\}) \rightarrow 2^Q$ is the transition function mapping a state and an input symbol (or the empty string $\varepsilon$) to a set of next states.
    \item $g: Q \rightarrow \{\wedge, \vee\}$ is the labeling function that assigns a logical type to each state: Universal ($\wedge$) or Existential ($\vee$).
    \item $q_0 \in Q$ is the initial state.
    \item $F \subseteq Q$ is the set of accepting (final) states.
\end{itemize}
\end{definition}

AFAs generalize NFAs by introducing two types of states: \textit{existential} states (which accept if at least one transition leads to acceptance) and \textit{universal} states (which accept only if all transitions lead to acceptance). This duality allows AFAs to represent Boolean logic directly within the state transition structure. The transition function may include $\varepsilon$-transitions, which allow the automaton to change state and propagate logic instantaneously without consuming input.

Unlike DFAs or NFAs, acceptance in an AFA is defined recursively. Let $x \in \Sigma^*$ be an input string. We define the predicate $Accept(q, x)$ as follows (assuming implicit handling of instantaneous $\varepsilon$-propagation):

\begin{enumerate}
    \item Base Case ($x = \varepsilon$): The empty string is accepted if and only if $q \in F$ (or can reach $F$ via $\varepsilon$-transitions).
    \item Recursive Step ($x = ay$, where $a \in \Sigma, y \in \Sigma^*$): Let $S = \delta(q, a)$ be the set of successors of $q$ on symbol $a$.
    \begin{itemize}
        \item If $g(q) = \vee$ (Existential), $Accept(q, x)$ is true if $\exists q' \in S$ such that $Accept(q', y)$ is true.
        \item If $g(q) = \wedge$ (Universal), $Accept(q, x)$ is true if $\forall q' \in S$, $Accept(q', y)$ is true.
    \end{itemize}
\end{enumerate}

If the set of successors $S$ is empty, the acceptance value is the identity element of the logic gate: \textit{False} for existential states (rejection) and \textit{True} for universal states (vacuously true). The string $x$ is accepted by the automaton $\mathcal{A}$ if and only if $Accept(q_0, x)$ is true.

\begin{definition}[Boolean State Vector]
\label{def:boolean_state_vector}
Let $Q = \{q_1, \dots, q_n\}$ be the set of states. For a given input prefix $x$, the boolean state vector $v \in \{0,1\}^n$ is defined such that:
\begin{equation}
    [v]_i = 1 \iff \text{State } q_i \text{ is active after processing } x.
\end{equation}
\end{definition}

To map the symbolic AFA semantics onto linear-algebraic operations, this vector captures the instantaneous configuration of the automaton as a binary vector. It allows the network to propagate the state of the entire system forward in time via matrix-vector multiplication, representing the automaton's accumulated logic up to the current step.

\begin{definition}[$\varepsilon$-Closure]
\label{def:epsilon_closure}
Let $Q$ be the state set of a transducer with $\varepsilon$-transitions.
The $\varepsilon$-closure of a state $q \in Q$, denoted $\mathrm{ECL}(q)$,
is the set of all states reachable from $q$ using zero or more
$\varepsilon$-transitions.

For a state set $S \subseteq Q$, the $\varepsilon$-closure is
\[
\mathrm{ECL}(S) = \bigcup_{q \in S} \mathrm{ECL}(q).
\]
\end{definition}

\begin{definition}[Feedforward Network with Activation]
\label{ffn_with_activation}
A feedforward network with depth \( D \) (i.e. \( D \) layers) is a function \( f_\theta : \mathbb{R}^{d} \to \mathbb{R} \) defined as a composition of sigmoid-activated linear transformations:
\[
f_\theta(x) = W_D \cdot \sigma(W_{D-1} \cdot \sigma( \cdots \sigma(W_1 x + b_1) \cdots ) + b_{D-1}) + b_D,
\]
where each \( W_i \in \mathbb{R}^{d_i \times d_{i-1}} \), \( b_i \in \mathbb{R}^{d_i} \) are trainable parameters, and $\sigma$ is an activation function.
\end{definition}

\begin{definition}[Time-Shared, Depth-Unrolled Feedforward Network (TS-FFN)~\cite{dhayalkar2025nfa}]
\label{def:tsfnn}
``A \emph{time-shared, depth-unrolled feedforward network (TS-FFN)} is an feedforward acyclic computation graph obtained by unrolling a shared transition function over the input sequence. Formally, for an input string \( x = (x_1, \dots, x_L) \) with one-hot token encodings \( e_{x_t} \in \mathbb{R}^k \), the network maintains a hidden state \( s_t \in \{0,1\}^n \) and applies the same parameterized transformation at each step:
\[
s_t = \sigma\!\big(T^{x_t} s_{t-1}\big), \quad t = 1,\dots,L,
\]
where \( T^{x_t} \in \{0,1\}^{n \times n} \) are symbol-conditioned transition matrices (shared across all positions) and \( \sigma \) is a threshold activation. The final output \( y = g(s_L) \) is computed by a feedforward readout layer. The term \emph{depth-unrolled} means that each symbol position \(t\) corresponds to one (or a fixed small number of) feedforward layer(s) in the unrolled graph. The term \emph{time-shared} means that the transition parameters are \emph{tied across} all positions \(t\): the same finite set of matrices \(\{T^x\}_{x\in\Sigma}\) (and \(T^\varepsilon\) when present) is reused at every depth, rather than introducing new parameters per step.

Although functionally equivalent to an unrolled recurrent or state-space model (RNN/SSM), the TS-FFN is represented as a purely feedforward, acyclic network with shared parameters across layers, and the TS-FFN viewpoint makes the computation explicitly acyclic and layer-aligned: each depth block can be interpreted as a single symbolic transition (and, when needed, a closure operator), enabling a direct mapping between automata dynamics and network operations.'' \emph{This definition is reproduced verbatim from~\cite{dhayalkar2025nfa}.}
\end{definition}

\section{Theoretical Framework}
\label{SEC:THEORETICAL_FRAMEWORK}
This section formalizes the architectural mechanisms by which nondeterministic state transitions and universal boolean logic are unified within the Time-Shared, Feedforward Network (TS-FFN) framework.

\begin{figure}[t]
\centering
\includegraphics[width=0.5\textwidth]{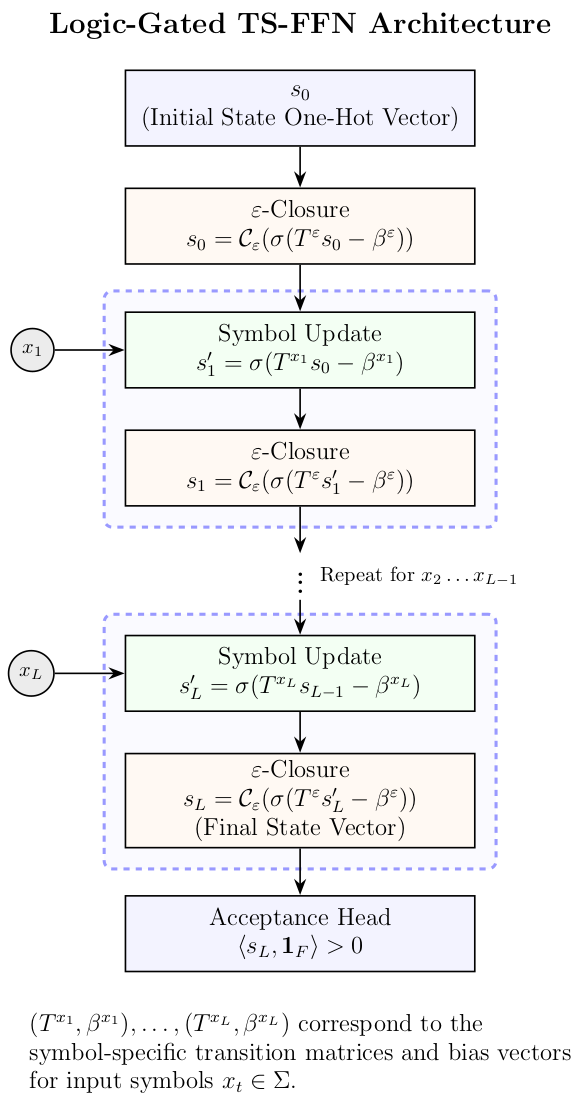}
\caption*{Figure 1: High-level Logic-Gated TS-FFN visualization. An Logic-Gated TS-FFN unrolls a shared logic-gated transition rule across depth (one block per symbol position). Each step applies a symbol-specific transition matrix $T^{x_t}$ and bias vector $\beta^{x_t}$ (with parameter pairs $(T^{x_1}, \beta^{x_1}), \dots, (T^{x_L}, \beta^{x_L})$ corresponding to input symbols from $\Sigma$), time-shared across $t$, interleaved with an $\varepsilon$-closure operator $\mathcal{C}_\varepsilon$ (parameterized by $T^\varepsilon, \beta^\varepsilon$). The resulting boolean state vector $s_t$ corresponds to the AFA's active configuration after consuming prefix $x_{1:t}$ and applying $\varepsilon$-closure, effectively implementing both Existential (\textsc{OR}) and Universal (\textsc{AND}) branching logic within the recurrence.}
\label{FIG:TSFFN-VIS}
\end{figure}

\subsection{Logic-Gated Neural Architecture}
\label{subsec:logic_gated_neural_architecture}
\begin{definition}[Logic-Gated Time-Shared Feedforward Network]
\label{DEF:LOGIC_GATED_TSFFN}
A Logic-Gated TS-FFN is a depth-unrolled network where each layer $t$ applies a linear transformation followed by a state-dependent threshold activation. The architecture processes the input sequence in chronological order (from $t=1 \to L$) to propagate state configurations.

The update rule at step $t$ (corresponding to input symbol $x_t$) incorporates both symbol consumption and instantaneous $\varepsilon$-logic propagation:
\begin{equation}
    v_t = \mathcal{C}_\varepsilon \left( \sigma\left(T^{(x_t)} v_{t-1} - \beta^{(x_t)}\right) \right)
\end{equation}
where:
\begin{itemize}
    \item $v_t \in \{0,1\}^n$ is the boolean state vector representing the configuration after processing the prefix $x_1\dots x_t$.
    \item $T^{(x_t)}$ and $\beta^{(x_t)}$ are the transition matrix and logic-gating bias vector for symbol $x_t$.
    \item $\mathcal{C}_\varepsilon(\cdot)$ is the $\varepsilon$-closure operator, defined as the boolean fixed point of the recurrent update $u \leftarrow \sigma(T^{(\varepsilon)} u - \beta^{(\varepsilon)})$, which resolves instantaneous logic propagation across $\varepsilon$-transitions. This operation is applied in the same procedure as mentioned in~\cite{dhayalkar2025nfa}
    \item $\sigma(z) = \mathbf{1}_{[z \ge 0]}$ is the binary step activation function.
\end{itemize}
\end{definition}

Standard TS-FFNs, as used for NFA simulation \cite{dhayalkar2025nfa}, employ a fixed binary threshold activation $\sigma(Wx)$. In a binary setting where weights $W_{ij} \in \{0,1\}$, the pre-activation sum $z = \sum W_{ij} v_j$ simply counts the number of active predecessors. A binary threshold ($\sigma(z) = \mathbf{1}_{[z > 0]}$) triggers activation if this count is at least 1, which corresponds strictly to a logical \textsc{OR}: a neuron fires if the weighted sum of its inputs is strictly positive (i.e., if \textit{at least one} incoming path is active). While sufficient for nondeterminism, this fixed threshold cannot represent the Universal \textsc{AND} branching inherent to AFAs, which requires a stricter condition: a state must activate only if \textit{all} of its $d_{in}$ incoming paths are active simultaneously. A standard zero-threshold neuron cannot distinguish between ``1 active input'' and ``$d_{in}$ active inputs'' and instead it fires for both.

The learnable bias term $\beta$ solves this by acting as a negative offset to the threshold. The activation condition becomes:
\begin{equation}
    \sum_{j} T_{ij} v_j - \beta > 0 \implies \text{Active inputs} > \beta
\end{equation}
By tuning $\beta$, we can precisely control the evidence required to fire:
\begin{itemize}
    \item Setting $\beta \approx 0.5$ requires strictly $>0$ inputs (i.e., $\ge 1$), recovering existential logic.
    \item Setting $\beta \approx d_{in} - 0.5$ requires strictly $> d_{in} - 1$ inputs (i.e., exactly $d_{in}$), achieving universal logic.
\end{itemize}
Thus, $\beta$ transforms the neuron from a static \textsc{OR} gate into a configurable logic gate capable of representing the full boolean spectrum required for Alternating Automata. Refer to Fig.~\hyperref[FIG:TSFFN-VIS]{1} for a high-level visualization.

\begin{proposition}[Logical Aggregation via Biased Linear Units]
\label{PROP:LOGICAL_AGGREGATION}
Let $q_i \in Q$ be a state with a set of predecessors $P_i = \{q_j \mid q_i \in \delta(q_j, x)\}$ for input symbol $x \in \Sigma \cup \{\varepsilon\}$. Let $d_i = |P_i|$ denote the in-degree of state $q_i$ for this symbol (the number of incoming connections). Let $v_{prev} \in \{0,1\}^n$ be the boolean activation vector of the states at the previous time step.

The boolean activation value of $q_i$ at the current step can be computed exactly via a linear threshold operation:
\begin{equation}
    v_{curr}[i] = \sigma \left( \sum_{j=1}^n T_{ij}^{(x)} v_{prev}[j] - \beta_i^{(x)} \right)
\end{equation}
where $T_{ij}^{(x)} = 1 \iff q_i \in \delta(q_j, x)$ (encoding the transition $q_j \xrightarrow{x} q_i$), and the state-dependent bias $\beta_i^{(x)}$ is defined as:
\begin{itemize}
    \item Case 1 (Existential Aggregation / \textsc{OR}): $\beta_i^{(x)} = 0.5$.
    \item Case 2 (Universal Aggregation / \textsc{AND}): $\beta_i^{(x)} = d_i - 0.5$.
\end{itemize}
\end{proposition}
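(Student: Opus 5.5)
The proof reduces to a counting argument on the pre-activation. First I fix what ``exactly'' means: the target value is $\bigvee_{j\in P_i} v_{prev}[j]$ when $g(q_i)=\vee$, and $\bigwedge_{j\in P_i} v_{prev}[j]$ when $g(q_i)=\wedge$. Since $T^{(x)}_{ij}=1$ exactly when $q_j\in P_i$, and both $T$ and $v_{prev}$ are $\{0,1\}$-valued, the pre-activation sum is
\[
z_i := \sum_{j=1}^n T^{(x)}_{ij} v_{prev}[j] = \bigl|\{\, j\in P_i : v_{prev}[j]=1\,\}\bigr| =: k_i .
\]
So $z_i$ is an integer with $0\le k_i\le d_i$. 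The neuron then fires iff $k_i-\beta_i^{(x)}\ge 0$, using $\sigma(z)=\mathbf{1}_{[z\ge 0]}$ from Definition~\ref{DEF:LOGIC_GATED_TSFFN}. The half-integer biases ensure that $k_i-\beta_i$ is never zero, so the $\ge$ versus $>$ convention in the step function never matters.

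\textbf{Case 1 (existential).} With $\beta_i=0.5$, the neuron fires iff $k_i\ge 0.5$. Because $k_i$ is an integer, this holds iff $k_i\ge 1$, i.e., iff some predecessor $q_j\in P_i$ is active. That is exactly the \textsc{OR} over $P_i$. It also gives the correct value False when $P_i=\emptyset$, since then $k_i=0<0.5$.

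\textbf{Case 2 (universal).} With $\beta_i=d_i-0.5$, the neuron fires iff $k_i\ge d_i-0.5$. By integrality this holds iff $k_i\ge d_i$, and together with $k_i\le d_i$ it holds iff $k_i=d_i$, i.e., iff every predecessor in $P_i$ is active. That is the \textsc{AND}.

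The one step needing care is the degenerate case $d_i=0$ for a universal state. Here $\beta_i=-0.5$ and $k_i=0$, so the neuron fires, giving the vacuous truth value True. This matches the identity-element convention the paper fixes for empty successor sets. I would remark on this explicitly, since it is where the bias formula has to agree with the AFA semantics rather than just with counting.

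I would also note the scope of the proposition, without extending the proof. It is a single-step, single-neuron statement aggregating over predecessors, which is the ``backward'' view consistent with the forward-pass network. Its agreement with the recursive $Accept$ predicate, and the handling of $\varepsilon$-steps through $\mathcal{C}_\varepsilon$ (which uses the same gate with $x=\varepsilon$, so the same computation applies verbatim), are deferred to Theorem~\ref{THM:AFA_SIMULATION}. Beyond the integrality observation and the empty-set edge case, no real obstacle is expected.
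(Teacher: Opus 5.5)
Your proof is correct and is essentially the paper's argument: reduce the pre-activation to the integer count $Z_i$ of active predecessors, then use integrality for $\beta_i = 0.5$, and integrality together with $Z_i \le d_i$ for $\beta_i = d_i - 0.5$. Your extra remarks are sound refinements that the paper leaves implicit: half-integer biases make the $\ge$ versus $>$ convention irrelevant, and the $d_i = 0$ universal case is handled explicitly. One caveat is that firing on an empty \emph{predecessor} set is just the vacuous \textsc{AND} of the proposition's own semantics, not literally the paper's convention for empty \emph{successor} sets in the $Accept$ recursion.
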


\emph{Proof Sketch provided in Appendix~\ref{PROOF:PROP:LOGICAL_AGGREGATION}}. This proposition demonstrates that the structural distinction between collecting signals via Union (\textsc{OR}) or Intersection (\textsc{AND}) can be encoded entirely within the bias vector of a standard linear layer. The bias $\beta$ effectively sets the firing threshold of the neuron relative to its incoming connectivity $d_i$. A low threshold ($\beta \approx 0$) creates an \textsc{OR} gate, allowing the state to activate if any path reaches it. A high threshold ($\beta \approx d_i$) creates an \textsc{AND} gate, requiring all incoming paths to be active. This is also explained in the above Defintion~\ref{DEF:LOGIC_GATED_TSFFN} and allows the Forward TS-FFN to function as a depth-unrolled boolean circuit, applying logic uniformly to both input-driven transitions and instantaneous $\varepsilon$-closure steps defined in Definition~\ref{DEF:LOGIC_GATED_TSFFN}.

Relation to Prior Work:
This generalizes the binary state vector representation for NFAs introduced in \cite{dhayalkar2025nfa}. In that framework, the bias was implicitly zero, restricting the network to calculating reachability (Existential quantification). By explicitly parameterizing the bias, we extend the representational capacity of the network to handle complex boolean aggregations (such as requiring multiple simultaneous paths to converge), unlocking greater expressivity within the fundamental linear-algebraic structure of the update rule.

\begin{theorem}[Simulation of AFAs by Logic-Gated TS-FFNs]
\label{THM:AFA_SIMULATION}
Let $\mathcal{A} = (Q, \Sigma \cup \{\varepsilon\}, \delta, q_0, g, F)$ be an Alternating Finite Automaton with $n$ states. There exists a time-shared, depth-unrolled feedforward network $f_\theta$ that simulates the state transitions of $\mathcal{A}$. Specifically, for any input string $x = x_1 \dots x_L$, the network computes the state configuration through the following forward-unrolled process:

\begin{enumerate}
    \item Initialization ($t=0$): Let $e_{q_0} \in \{0,1\}^n$ be the one-hot indicator vector for the start state $q_0$ (where $[e_{q_0}]_i = 1$ if $q_i = q_0$, and $0$ otherwise). The boolean state vector $v_0$ is initialized to this vector, followed by an instantaneous $\varepsilon$-closure to account for initial $\varepsilon$-transitions:
    \begin{equation}
        v_0 = \mathcal{C}_\varepsilon(e_{q_0})
    \end{equation}
    This encodes the base case: at the start of the string, the active states are $q_0$ and any states reachable via $\varepsilon$-transitions.

    \item Forward Propagation ($t = 1, \dots, L$): For each symbol $x_t$ in the sequence (processed in chronological order), the network updates the state vector via the Logic-Gated layer defined in Definition \ref{DEF:LOGIC_GATED_TSFFN}, incorporating the $\varepsilon$-closure operator $\mathcal{C}_\varepsilon$:
    \begin{equation}
        v_t = \mathcal{C}_\varepsilon \left( \sigma\left(T^{(x_t)} v_{t-1} - \beta^{(x_t)}\right) \right)
    \end{equation}
    where $T^{(x_t)}$ is the transition matrix for symbol $x_t$, and $\beta^{(x_t)}$ is the connectivity-dependent bias vector defined in Proposition \ref{PROP:LOGICAL_AGGREGATION}.

    \item Readout ($t=L$): The input string $x$ is accepted if the final state configuration $v_L$ satisfies the acceptance condition (typically, having a non-empty intersection with the set of final states $F$):
    \begin{equation}
        f_\theta(x) = \mathbf{1}_{[v_L \cdot \mathbf{1}_F^T > 0]}
    \end{equation}
\end{enumerate}
\end{theorem}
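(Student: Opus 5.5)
We prove the theorem by induction on $t$, maintaining an explicit invariant that ties $v_t$ to the AFA's configuration. First we fix the construction of $f_\theta$. For each $x \in \Sigma \cup \{\varepsilon\}$ we set $T^{(x)}_{ij} = 1$ iff $q_i \in \delta(q_j, x)$. The bias $\beta^{(x)}_i$ is chosen as in Proposition~\ref{PROP:LOGICAL_AGGREGATION} according to $g(q_i)$: we take $0.5$ if $g(q_i)=\vee$, and $d_i^{(x)}-0.5$ if $g(q_i)=\wedge$. The readout vector is $\mathbf{1}_F$. All parameters depend only on $\mathcal{A}$ and are reused at every depth, so the network is time-shared in the sense of Definition~\ref{def:tsfnn}.

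Next we make the target semantics precise. The statement's forward process propagates gates from predecessors to successors. We therefore define the AFA's \emph{forward configuration} $S_t \subseteq Q$ recursively. We set $S_0 = \mathrm{ECL}$-closure of $\{q_0\}$ under the gated rule. A state $q_i$ enters the pre-closure set $S_t'$ iff its aggregate over $P_i^{(x_t)} \cap S_{t-1}$ is true: at least one such predecessor when $g(q_i)=\vee$, all $d_i$ of them when $g(q_i)=\wedge$. Finally, $S_t$ is the least fixed point of the gated $\varepsilon$-update starting from $S_t'$. The invariant is $v_t = \mathbf{1}_{S_t}$.

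The induction is then routine. The base case $v_0 = \mathcal{C}_\varepsilon(e_{q_0}) = \mathbf{1}_{S_0}$ holds by definition of $\mathcal{C}_\varepsilon$. For the step, apply Proposition~\ref{PROP:LOGICAL_AGGREGATION} coordinatewise to $v_{t-1} = \mathbf{1}_{S_{t-1}}$. This gives $\sigma(T^{(x_t)} v_{t-1} - \beta^{(x_t)}) = \mathbf{1}_{S_t'}$, because the preactivation equals $|P_i \cap S_{t-1}| - \beta_i$ with integer count, so the half-integer thresholds separate the cases exactly. We then apply the proposition again with $x=\varepsilon$ at each iteration of the closure recurrence. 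Readout correctness follows immediately: $v_L\cdot \mathbf{1}_F > 0$ iff $S_L \cap F \neq \emptyset$.

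The main obstacle is the $\varepsilon$-closure. We must show that the recurrence $u \leftarrow \sigma(T^{(\varepsilon)}u - \beta^{(\varepsilon)})$, run as in \cite{dhayalkar2025nfa}, reaches a well-defined boolean fixed point in at most $n$ iterations. With AND gates the naive update need not be monotone: a previously active state with no active $\varepsilon$-predecessor could switch off, and cycles could oscillate. To handle this, we interpret $\mathcal{C}_\varepsilon$ as the accumulating update $u \leftarrow u \vee \sigma(T^{(\varepsilon)}u - \beta^{(\varepsilon)})$. This is the convention inherited from the NFA construction, where closure adds states and never removes them. This map is monotone and inflationary on the finite lattice $\{0,1\}^n$. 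By Knaster--Tarski together with finiteness, it stabilizes after at most $n$ steps at the least fixed point containing the input. That fixed point is exactly $S_t$ as defined, so the closure coincides with the AFA's instantaneous $\varepsilon$-propagation, and unrolling it adds only $n$ fixed-parameter layers per symbol. Finally, we remark that a state with $d_i=0$ and $g(q_i)=\wedge$ receives bias $-0.5$, so it fires unconditionally. This is consistent with the vacuous-truth convention of Definition~\ref{DEF:AFA}, and we include it in the definition of $S_t$.
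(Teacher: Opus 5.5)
Your induction is the same as the paper's: Proposition~\ref{PROP:LOGICAL_AGGREGATION} coordinatewise for the symbol step, the same proposition with $x=\varepsilon$ inside the closure, then the readout. The real difference is how you handle the $\varepsilon$-closure. The paper iterates the literal update $u\leftarrow\sigma(T^{(\varepsilon)}u-\beta^{(\varepsilon)})$ and infers convergence from the finiteness of the state space. That inference is invalid. With pure \textsc{OR} gates the update already deactivates $q_0$ after one round unless $q_0$ has an $\varepsilon$-self-loop, and it can oscillate forever on an $\varepsilon$-cycle such as $q_0\to q_1\to q_0$. Your accumulating rule $u\leftarrow u\vee\sigma(T^{(\varepsilon)}u-\beta^{(\varepsilon)})$ is monotone and extensive. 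Its iterates therefore increase and stop within $n$ rounds at the least fixed point above the input, which gives the bounded depth per symbol that a feedforward unrolling needs. The price is that you are proving the theorem for a modified $\mathcal{C}_\varepsilon$ (realizable, e.g., by suitable self-weights in $T^{(\varepsilon)}$). You should flag this as a modification rather than present it as what the definition says. Also correct your diagnosis: the map $u\mapsto\sigma(T^{(\varepsilon)}u-\beta^{(\varepsilon)})$ is monotone, because $T^{(\varepsilon)}\ge 0$. What fails is extensivity, and that failure has nothing to do with \textsc{AND} gates.

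Your vacuous-truth remark is wrong. Definition~\ref{DEF:AFA} makes a universal state with no \emph{successors} accept. Bias $-0.5$ instead makes a universal state with no \emph{predecessors} fire unconditionally. So $v_0$ can contain states not reachable from $q_0$, and with \textsc{AND} gates it can also miss $\varepsilon$-reachable ones, both contrary to the statement's description of the base case. Building this into $S_t$ keeps your induction valid, but only because $S_t$ is defined to mirror the network. As in the paper's proof, nothing connects $S_L\cap F\neq\emptyset$ to $Accept(q_0,x)$, and the two differ in general. Take universal $q_0$ with $\delta(q_0,a)=\{q_1,q_2\}$, $F=\{q_1\}$, and no $\varepsilon$-moves. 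The AFA rejects $a$. Yet $q_1$'s only $a$-predecessor $q_0$ is active, so your network fires $q_1$ and accepts. What both arguments actually establish is that the network computes the forward gated configuration $S_t$, not the AFA's acceptance predicate.
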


\emph{Proof Sketch provided in Appendix~\ref{PROOF:THM:AFA_SIMULATION}}. A critical distinction of this architecture is the incorporation of logic-gated aggregation within the forward pass. In standard NFA simulations such as in~\cite{dhayalkar2025nfa}, the network propagates reachability using fixed \textsc{OR}-logic ($t=0 \to L$). Here, the Logic-Gated TS-FFN maintains the forward processing order but enhances the state update dynamics. By allowing neurons to implement both \textsc{OR} (existential) and \textsc{AND} (universal) aggregation via the learned bias $\beta$, the network can model complex synchronization requirements and universal quantification constraints directly within the forward unrolling, effectively extending the computational capacity beyond standard NFA simulation without increasing the state space size $n$.

Relation to Prior Work:
This construction generalizes the TS-FFN framework to handle complex boolean dependencies. While \cite{dhayalkar2025nfa} established that TS-FFNs can simulate nondeterminism via thresholding, that result was limited to Existential ($\exists$) aggregation. Theorem \ref{THM:AFA_SIMULATION} proves that the exact same backbone, when equipped with appropriate biases, is capable of simulating Universal ($\forall$) aggregation, allowing the network to enforce that multiple computational paths must remain active simultaneously to trigger a transition.

\subsection{Succinctness and Complexity}
\label{sec:succinctness}

We now analyze the resource complexity of the Logic-Gated TS-FFN. A key advantage of Alternating Finite Automata is their ability to represent certain regular languages exponentially more succinctly than NFAs or DFAs. We show that our neural construction inherits this succinctness directly.

\begin{proposition}[Succinctness and Parameter Efficiency]
\label{PROP:SUCCINCTNESS}
Let $\mathcal{A}$ be an AFA with $n$ states and an alphabet of size $|\Sigma|=k$. Construct a Logic-Gated TS-FFN $f_\theta$ that simulates $\mathcal{A}$ according to Theorem \ref{THM:AFA_SIMULATION}. Then:
\begin{enumerate}
    \item Parameter Count: The total number of trainable parameters is $\mathcal{O}(k n^2)$. Specifically, the model requires:
    \begin{itemize}
        \item $k$ symbol transition matrices $T^{(x)}$ and 1 $\varepsilon$-transition matrix $T^{(\varepsilon)}$, each of size $n \times n$.
        \item $k$ symbol bias vectors $\beta^{(x)}$ and 1 $\varepsilon$-bias vector $\beta^{(\varepsilon)}$, each of size $n$.
    \end{itemize}
    This count depends only on the automaton size and alphabet, and is strictly independent of the input string length $L$.
    
    \item Succinctness: The network width corresponds exactly to the number of states $n$. Since an AFA with $n$ states can recognize languages requiring $2^n$ states in an NFA, this architecture achieves exponentially greater representational capacity per neuron compared to standard NFA-simulating networks.
\end{enumerate}
\end{proposition}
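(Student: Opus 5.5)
The plan splits into the two claims of Proposition~\ref{PROP:SUCCINCTNESS}. Both follow from the explicit construction in Theorem~\ref{THM:AFA_SIMULATION} together with the bias specification of Proposition~\ref{PROP:LOGICAL_AGGREGATION}.

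\textbf{Parameter count.} I will inventory the parameters used by the forward process of Theorem~\ref{THM:AFA_SIMULATION}.
\begin{enumerate}
\item For each $x \in \Sigma$, the update uses one matrix $T^{(x)} \in \{0,1\}^{n\times n}$, defined by $T^{(x)}_{ij}=1 \iff q_i\in\delta(q_j,x)$, and one bias vector $\beta^{(x)}\in\mathbb{R}^n$, with entries $0.5$ or $d_i-0.5$.
\item The closure operator $\mathcal{C}_\varepsilon$ of Definition~\ref{DEF:LOGIC_GATED_TSFFN} uses only $T^{(\varepsilon)}$ and $\beta^{(\varepsilon)}$. It iterates this one pair until a fixed point is reached, so the iteration adds no new parameters.
\item The initialization $e_{q_0}$ and the readout mask $\mathbf{1}_F$ are fixed indicator vectors of size $n$. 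They add at most $2n$ entries even if counted as parameters.
\end{enumerate}
Summing gives $(k+1)n^2 + (k+1)n + O(n) = \mathcal{O}(kn^2)$.

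Independence from $L$ comes from the time-sharing clause of Definition~\ref{def:tsfnn}. Every depth block $t$ reuses the same pair $(T^{(x_t)},\beta^{(x_t)})$ drawn from the finite set indexed by $\Sigma\cup\{\varepsilon\}$. Unrolling to length $L$ therefore duplicates computation but not parameters. I will note that the number of $\varepsilon$-iterations is bounded by $n$, because a monotone boolean fixed point on $n$ coordinates stabilizes within $n$ rounds. This keeps the depth finite, but depth does not affect the parameter count in either case.

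\textbf{Succinctness.} The hidden state $v_t$ has exactly $n$ coordinates, one per state of $\mathcal{A}$, so the width is $n$. I will then cite the classical separation \cite{Chandra1981,Fellah01011990}: there are families of languages $L_n$ recognized by $n$-state AFAs for which every NFA needs at least $2^n$ states. For such an $L_n$, Theorem~\ref{THM:AFA_SIMULATION} produces an LG-TS-FFN of width $n$.

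For comparison I take an NFA-simulating TS-FFN as in \cite{dhayalkar2025nfa}. There each neuron corresponds to an NFA state, since the zero-bias network computes exactly NFA reachability. Such a network therefore needs width at least $2^n$ to recognize $L_n$, which gives the claimed exponential gap in capacity per neuron.

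\textbf{Main obstacle.} The delicate step is making the comparison to NFA-simulating networks rigorous. One must argue that a width-$m$ OR-only TS-FFN with binary weights and states is itself an $m$-state NFA. This requires reading off transitions from the $T$ matrices and accepting states from the readout, so that lower bounds on NFA size transfer to lower bounds on network width. I would prove this by a direct reinterpretation: neuron $i$ becomes state $i$, and $T^{(x)}_{ij}=1$ becomes the transition $j\xrightarrow{x}i$. Induction on $t$ then shows that the network's vector equals the NFA's reachable set.

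If this is too strong for arbitrary real weights, I would restrict the claim to the binary TS-FFNs of Definition~\ref{def:tsfnn}, which is the setting the proposition refers to.
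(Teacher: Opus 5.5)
Your parameter count matches the paper's argument essentially verbatim: $(k+1)n^2+(k+1)n$ entries for the tied pairs $(T^{(x)},\beta^{(x)})$, $x\in\Sigma\cup\{\varepsilon\}$, plus $\mathcal{O}(n)$ for $e_{q_0}$ and $\mathbf{1}_F$, with independence from $L$ coming from the tying. Your succinctness part has the same skeleton as the paper's (width $n$ plus the classical $2^n$ AFA-to-NFA gap), but you strengthen the comparison side. The paper only asserts that a network built from the $2^n$-state NFA has $2^n$ neurons, so it compares two specific constructions. Your reinterpretation lemma instead turns NFA lower bounds into width lower bounds for every OR-only network.

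That lemma needs its restriction on the readout, not the weights. Definition~\ref{def:tsfnn} allows an arbitrary readout layer $g(s_L)$. Take a binary OR-only network over $\{a\}$ that launches disjoint cycles of coprime lengths $p_1,\dots,p_r$ from $q_0$ via $\varepsilon$-edges, read out by one threshold neuron requiring every cycle to sit at its origin. It recognizes $(a^{p_1\cdots p_r})^*$ with width $1+\sum_i p_i$, while every NFA for that language needs $\prod_i p_i$ states. So you must fix the readout to $\mathbf{1}_{[\langle s_L,\mathbf{1}_F\rangle>0]}$, as in the NFA-simulating networks. Separately, your $n$-round bound for $\mathcal{C}_\varepsilon$ needs the inflationary update $u\leftarrow u\vee\sigma(T^{(\varepsilon)}u-\beta^{(\varepsilon)})$. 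Monotonicity alone does not give it: the bare update oscillates on an $\varepsilon$-two-cycle started at one endpoint. This does not affect the count.

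The step that actually fails is one you share with the paper's proof: ``Theorem~\ref{THM:AFA_SIMULATION} produces an LG-TS-FFN of width $n$ for $L_n$.'' The forward update $v_t=\mathcal{C}_\varepsilon(\sigma(T^{(x_t)}v_{t-1}-\beta^{(x_t)}))$ is a deterministic map on $\{0,1\}^n$. Hence any language accepted by a width-$n$ forward network has a DFA with at most $2^n$ states, whereas $n$-state AFAs can require doubly exponentially many DFA states \cite{Chandra1981,Fellah01011990}. So the theorem cannot hold for all $n$-state AFAs, and in particular not for families whose $2^n$ NFA bound comes, via the subset construction, from the doubly exponential DFA bound.

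Concretely, take $g(q_0)=\wedge$, $\delta(q_0,a)=\{q_1,q_2\}$, $F=\{q_1\}$. The AFA rejects $a$ because $q_2\notin F$, yet the network activates $q_1$ and accepts. The forward bias makes $g(q)$ control how $q$ combines its \emph{predecessors}, whereas Definition~\ref{DEF:AFA} uses $g(q)$ to control how $q$'s \emph{successors} must jointly accept.

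To make your argument sound, cite instead a backward, suffix-evaluating network of the same width:
\begin{itemize}
\item set $u_L=\mathbf{1}_F$;
\item set $u_{t-1}=\sigma\big((T^{(x_t)})^{\top}u_t-\hat{\beta}^{(x_t)}\big)$, where $\hat{\beta}^{(x_t)}_j$ is $0.5$ or $|\delta(q_j,x_t)|-0.5$ according to $g(q_j)$;
\item accept iff $u_0[q_0]=1$.
\end{itemize}
Induction gives $u_t[j]=Accept(q_j,x_{t+1}\cdots x_L)$ for $\varepsilon$-free $\mathcal{A}$, with vacuous truth and falsity handled by the $-0.5$ and $0.5$ biases. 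The parameter count stays $\mathcal{O}(kn^2)$, and your OR-only lower bound, with the readout fixed, then applies unchanged.
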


\emph{Proof Sketch provided in Appendix~\ref{PROOF:PROP:SUCCINCTNESS}}. This result highlights a fundamental efficiency gain: we achieve exponential expressivity not by adding more neurons, but by enriching the \textit{interactions} between them. The addition of the learnable bias vector $\beta^{(x)}$ transforms a standard linear layer into a dynamic logic circuit. This allows the network to represent complex boolean conditions (intersections and complements of path properties) essentially for free within the linear update step. Consequently, the model can capture patterns that would otherwise require massive over-parameterization in standard recurrent or feedforward models. Furthermore, since the transition matrices and bias vectors are time-shared, the total parameter count remains $\mathcal{O}(kn^2)$ and is \textbf{strictly independent of the input sequence length $L$}.

\begin{remark}[Comparison to NFA Simulation]
Prior research \cite{dhayalkar2025nfa} demonstrated that TS-FFNs can simulate NFAs with $\mathcal{O}(kn^2)$ parameters. The Logic-Gated architecture maintains this same parameter complexity class while dramatically increasing expressivity. While the NFA-based networks described in \cite{dhayalkar2025nfa} are restricted to languages with compact NFA representations, the Logic-Gated architecture can efficiently learn languages that are structurally complex (e.g., those requiring universal quantification $\forall$), which standard NFA networks can only approximate via much larger state spaces.
\end{remark}

\subsection{Encoding input strings within the Logic-Gated TS-FFN framework}
A central component of the Logic-Gated Time-Shared Feedforward Network (TS-FFN) construction is how an input string $x = x_1 x_2 \dots x_L \in \Sigma^*$ is provided to the network. Unlike conventional neural networks that embed an entire string into a single continuous vector, the Logic-Gated TS-FFN decouples the symbolic sequence from the network architecture itself. The string instead determines the sequence of shared transition matrices and logic-gated bias vectors that are applied during the forward pass, producing an acyclic unrolled computation that exactly mirrors the logical transition semantics of an Alternating Finite Automaton.

Transition parameter selection:
For every symbol $x \in \Sigma$, the model maintains a corresponding pair consisting of a symbolic transition matrix $T^{(x)}$ and a logic-gating bias vector $\beta^{(x)}$. These parameter pairs encode both the connectivity and the boolean logic (AND/OR) associated with transitions on $x$. They can either be:
\begin{itemize}
    \item Fixed (Symbolic): Directly encoded from the known transition function $\delta$ and labeling function $g$ of the target AFA, as defined in Definition~\ref{DEF:LOGIC_GATED_TSFFN}.
    \item Learned (Parameterized): Initialized randomly and optimized via gradient descent using labeled examples, allowing the network to jointly learn topology and logical types as discussed in Proposition~\ref{PROP:LEARNABILITY}.
\end{itemize}

Encoding the string:
To process the input string $x = x_1 x_2 \dots x_L$, the TS-FFN does not embed $x$ as a single vector input. Instead, the symbols $x_t$ dictate which shared parameters are active at each depth $t$ of the unrolled graph. Specifically, the presence of symbol $x_t$ at position $t$ triggers the application of the specific tuple $(T^{(x_t)}, \beta^{(x_t)})$ to the hidden state $v_{t-1}$. This mechanism ensures that the network dynamically reconfigures its layer-wise boolean logic to match the input sequence, supporting both symbol-driven transitions and instantaneous $\varepsilon$-closures ($\mathcal{C}_\varepsilon$) at every step.

An example of encoding input strings into the network is provided in Appendix~\ref{EXAMPLE}

\section{Equivalence and Learnability}
\label{SEC:LEARNABILITY}

We now address the theoretical equivalence between our neural architecture and the class of Alternating Finite Automata.

\begin{theorem}[Equivalence between Logic-Gated TS-FFNs and AFAs]
\label{THM:EQUIVALENCE}
Let $\mathcal{L} \subseteq \Sigma^*$ be any regular language. Then:
\begin{enumerate}
    \item (Forward Direction) There exists a symbolic Logic-Gated Time-Shared, Depth-Unrolled Feedforward Network $f_{\theta}$ with:
    \begin{itemize}
        \item transition matrices $\{T^{(x)}\}_{x \in \Sigma \cup \{\varepsilon\}}$ and bias vectors $\{\beta^{(x)}\}_{x \in \Sigma \cup \{\varepsilon\}}$ representing the AFA structure,
        \item parameter sharing across all time steps (depth positions),
        \item width $\mathcal{O}(n)$, where $n$ is the number of AFA states,
    \end{itemize}
    such that for any input string $x \in \Sigma^*$, the network accepts $x$ if and only if $x \in \mathcal{L}$.

    \item (Reverse Direction) Every Logic-Gated Time-Shared, Depth-Unrolled Feedforward Network constructed using this symbolic simulation procedure corresponds to an AFA $\mathcal{A}'$ such that, for all $x \in \Sigma^*$:
    \begin{equation}
        f_{\theta}(x) = 1 \iff x \in \mathcal{L}(\mathcal{A}').
    \end{equation}
\end{enumerate}
\end{theorem}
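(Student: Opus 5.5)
The plan is to reduce both directions to Theorem~\ref{THM:AFA_SIMULATION} and Proposition~\ref{PROP:LOGICAL_AGGREGATION}, using two classical facts about AFAs. First, every regular language is recognized by some AFA; for instance, a DFA is an AFA with every state labeled $\vee$. Second, every AFA recognizes a regular language, as shown by Chandra et al.~\cite{Chandra1981} via subset-style constructions.

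\emph{Forward direction.} Given a regular $\mathcal{L}$, fix a DFA $D$ with $n$ states recognizing it and view $D$ as an AFA $\mathcal{A}$. Set $g\equiv\vee$, use no $\varepsilon$-transitions, and keep the same accepting set. Alternatively, take any smaller AFA for $\mathcal{L}$ if one is available. Define $T^{(x)}_{ij}=1 \iff q_i\in\delta(q_j,x)$ and $\beta^{(x)}$ as in Proposition~\ref{PROP:LOGICAL_AGGREGATION}. When $\varepsilon$-moves are absent, take $T^{(\varepsilon)}=0$ and let $\mathcal{C}_\varepsilon$ be the identity. These parameters are time-shared by construction, and the width is $n$. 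Proposition~\ref{PROP:SUCCINCTNESS} gives the parameter count. By Theorem~\ref{THM:AFA_SIMULATION}, $v_t$ is exactly the configuration after reading $x_{1:t}$. For a purely existential automaton this is the set of reachable states, a one-hot vector in the DFA case. The readout $\mathbf{1}_{[v_L\cdot\mathbf{1}_F^T>0]}$ is then $1$ iff the run ends in $F$, that is, iff $x\in\mathcal{L}$. This direction is routine.

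\emph{Reverse direction.} Start from a network built by the symbolic procedure, with $0/1$ matrices and biases of the two prescribed forms. Read off an automaton $\mathcal{A}'$ as follows. The states are the $n$ neurons. Put $q_i\in\delta'(q_j,x)$ iff $T^{(x)}_{ij}=1$. Set $g'(q_i)=\wedge$ when $\beta^{(x)}_i=d_i-0.5$ and $\vee$ when $\beta^{(x)}_i=0.5$. The start state comes from the initialization $e_{q_0}$, and $F'$ comes from the readout mask. Because the gate in this network aggregates over \emph{predecessors} and its type can depend on the symbol, I will formalize $\mathcal{A}'$ as a forward Boolean automaton. Its transition for each state and symbol is a Boolean formula over the previous configuration: an OR or an AND of the predecessor set, chosen per symbol, with the $\varepsilon$-closure given by the least fixed point. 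The network then computes exactly the characteristic vector of $\mathcal{A}'$'s configuration sequence, by induction on $t$ using Proposition~\ref{PROP:LOGICAL_AGGREGATION} at each step. Acceptance matches by definition of $F'$.

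It remains to show that $\mathcal{L}(\mathcal{A}')$ is an AFA language. This is the step I expect to be the main obstacle. The network's forward dynamics are a monotone Boolean function iterated forward from a single initial vector, while AFA acceptance in Definition~\ref{DEF:AFA} is defined by backward recursion. I would bridge them in two stages. First, observe that the forward system is a deterministic machine on the finite set $\{0,1\}^n$, so $\mathcal{L}(\mathcal{A}')$ is regular; this already suffices for the theorem's claim, since regular languages are AFA-recognizable. Second, for the sharper structural correspondence, use the standard duality that a forward Boolean automaton for $\mathcal{L}$ yields an AFA of the same size for the reversal $\mathcal{L}^R$. Combined with closure of regular languages under reversal, this gives an AFA $\mathcal{A}'$ with $f_\theta(x)=1\iff x\in\mathcal{L}(\mathcal{A}')$.

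I also need to check that the $\varepsilon$-closure fixed point is well defined when AND-gates lie on $\varepsilon$-cycles. I would fix the least fixed point, reached by iterating from the input vector for at most $n$ rounds, so the map is monotone and the iteration terminates.
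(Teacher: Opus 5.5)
Your proof is correct for the statement as literally written. It departs from the paper in both directions, and the departure is substantive.

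The paper's forward proof takes an arbitrary $n$-state AFA for $\mathcal{L}$, sets $T,\beta$ from $\delta,g$, and relies on Theorem~\ref{THM:AFA_SIMULATION}. Its reverse proof reads $\delta'$ off $T$ and $g'$ off $\beta$ on the same $n$ states, and asserts that this $\mathcal{A}'$ has the network's language. Both steps equate the network's gate with the semantics of Definition~\ref{DEF:AFA}. In the network, neuron $i$ combines its \emph{predecessors} under $g(q_i)$, left to right. In the AFA, state $q$ combines its \emph{successors} under $g(q)$, by backward recursion. These differ as soon as universal states appear. Take $q_0$ universal, $\delta(q_0,a)=\{q_1,q_2\}$, $F=\{q_1\}$ and no $\varepsilon$-moves: the AFA rejects $a$, yet $v_1$ has $q_1$ active and the network accepts.

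Your $\varepsilon$-free all-$\vee$ AFA (a DFA) is exactly the case where the two semantics agree. There $v_t$ is the one-hot DFA state by a direct induction, so you need not cite Theorem~\ref{THM:AFA_SIMULATION} at all. Your reverse argument (a deterministic machine on $\{0,1\}^n$, hence a regular language, hence recognized by some AFA) never requires the read-off automaton to be faithful.

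The paper's route, were it valid, would buy size preservation in both directions, which is what its succinctness claims rest on. Yours buys soundness at the price of size. The forward width is the DFA size, so $\mathcal{O}(n)$ holds only with $n$ counting DFA states, and the extracted AFA may have $2^n$ states. That price cannot be avoided for this left-to-right architecture. By your own argument a width-$m$ network yields a DFA with at most $2^m$ states, so the doubly exponential AFA-to-DFA gap the paper cites rules out width $\mathcal{O}(n)$ for arbitrary $n$-state AFAs.

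Three smaller points.
\begin{enumerate}
\item The aside ``take any smaller AFA for $\mathcal{L}$ if one is available'' reimports the failure above, since Theorem~\ref{THM:AFA_SIMULATION} cannot be invoked once universal states are present.
\item Stage two of your reverse direction adds nothing to stage one. The duality gives an $n$-state Boolean AFA for $\mathcal{L}(f_\theta)^R$, but reversal is not size-preserving for AFAs, so closure under reversal only returns some AFA for $\mathcal{L}(f_\theta)$. What does hold at size $n$ is that an $n$-neuron network recognizes the reversal of an $n$-state Boolean AFA language.
\item The literal update $u\leftarrow\sigma(T^{(\varepsilon)}u-\beta^{(\varepsilon)})$ is monotone but not inflationary; it oscillates on an $\varepsilon$-cycle between two $\vee$-states. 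Your termination claim therefore needs the componentwise form $u\leftarrow\max\{u,\sigma(T^{(\varepsilon)}u-\beta^{(\varepsilon)})\}$. Any such well-defined closure suffices for the regularity argument.
\end{enumerate}
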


\emph{Proof Sketch provided in Appendix~\ref{PROOF:THM:EQUIVALENCE}}. This theorem establishes that the Logic-Gated TS-FFN is not merely an approximation of an automaton, but a rigorous algebraic isomorphism of the AFA formalism. The forward result guarantees that our architecture possesses the necessary expressivity to represent any regular pattern efficiently. The reverse result guarantees that the network cannot hallucinate non-regular behaviors, ensuring that trained models remain verifiable and interpretable.

\begin{proposition}[Gradient-Based Learnability of Logic Gates]
\label{PROP:LEARNABILITY}
Let $\mathcal{D} = \{(x^{(i)}, y^{(i)})\}$ be a dataset of input strings $x^{(i)}$ labeled by the acceptance $y^{(i)}$ of an unknown target AFA. A Logic-Gated TS-FFN, parameterized by trainable symbol matrices $\{T^{(x)}, \beta^{(x)}\}$ and epsilon matrices $\{T^{(\varepsilon)}, \beta^{(\varepsilon)}\}$, can be trained to approximate the target AFA via gradient descent.

To ensure differentiability during training, we relax the discrete constraints as follows:
\begin{enumerate}
    \item The binary threshold activation $\sigma(z) = \mathbf{1}_{[z \ge 0]}$ is relaxed to the logistic sigmoid function $\sigma_\lambda(z) = (1 + e^{-\lambda z})^{-1}$, where $\lambda$ is a temperature parameter controlling the steepness of the transition.
    \item The binary constraints on the transition matrices are relaxed to $T^{(x)}, T^{(\varepsilon)} \in \mathbb{R}^{n \times n}$.
    \item The bias vectors $\beta^{(x)}, \beta^{(\varepsilon)} \in \mathbb{R}^n$ are learned freely without constraints.
\end{enumerate}

During the optimization process:
\begin{itemize}
    \item The transition matrices learn the connectivity structure (adjacency) of both symbol-driven and instantaneous $\varepsilon$-transitions.
    \item The bias vectors learn the logical type of each state. A learned bias value $\beta \approx 0.5$ drives the node towards Existential (\textsc{OR}) aggregation, while a value $\beta \approx \text{in-degree} - 0.5$ drives it towards Universal (\textsc{AND}) aggregation.
    \item The forward-unrolled computation graph accumulates gradients from the entire sequence length, enabling the network to infer the appropriate logical structure solely from end-to-end binary acceptance labels via standard backpropagation.
\end{itemize}
\end{proposition}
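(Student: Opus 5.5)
The statement is a learnability claim rather than a quantitative convergence theorem, so I would prove three things. First, the relaxed model is a well-defined, almost-everywhere differentiable function of all parameters. Second, the exact symbolic AFA simulator of Theorem~\ref{THM:AFA_SIMULATION} lies in the closure of the relaxed family, so that a zero-loss (or arbitrarily small-loss) parameter configuration exists. Third, gradients of the end-to-end loss reach every parameter $T^{(x)},\beta^{(x)},T^{(\varepsilon)},\beta^{(\varepsilon)}$, so that standard backpropagation can move toward such a configuration. I will not claim global convergence from arbitrary initialization, since that is false in general for nonconvex losses. The proposition only asserts that the network \emph{can} be trained to approximate the target, and I will read it as realizability plus differentiability plus correct gradient flow.

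Step one (differentiability). I will replace $\mathcal{C}_\varepsilon$ by a fixed number $K\le n$ of unrolled iterations $u\leftarrow\sigma_\lambda(T^{(\varepsilon)}u-\beta^{(\varepsilon)})$. This matches the boolean fixed point, because a monotone closure on $n$ states stabilizes within $n$ steps. Then each $v_t$ is a finite composition of affine maps and $\sigma_\lambda$, hence smooth in $\theta$. The readout $\mathbf{1}_{[\cdot>0]}$ is relaxed likewise, and with binary cross-entropy the loss $\mathcal{L}(\theta)=\sum_i \ell(f_\theta(x^{(i)}),y^{(i)})$ is $C^\infty$. The chain rule through the unrolled graph gives $\partial\mathcal{L}/\partial\beta^{(a)}=\sum_{t:x_t=a}(\partial\mathcal{L}/\partial v_t)\cdot(\partial v_t/\partial\beta^{(a)})$. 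This makes precise the claim that gradients accumulate over the whole sequence through the time-shared parameters.

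Step two (realizability). I will take the symbolic parameters $(T^*,\beta^*)$ from Proposition~\ref{PROP:LOGICAL_AGGREGATION}, with $\beta_i=0.5$ for OR states and $d_i-0.5$ for AND states. At a boolean input, every pre-activation $z$ then satisfies $|z|\ge 0.5$, so the margin is at least $1/2$. The main work is an error-propagation bound. I claim that if $\|v_{t-1}-v^*_{t-1}\|_\infty\le\eta$ with $\eta$ small relative to $1/(2n)$, then $|z-z^*|\le n\eta$ stays below the margin, and so $\|\sigma_\lambda(z)-\mathbf{1}_{[z^*\ge0]}\|_\infty\le (1+e^{\lambda/4})^{-1}$. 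Choosing $\lambda$ large enough that this quantity is at most $\eta$ closes an induction over all $t\le L$ and all closure iterations. The errors do not compound, because the margin re-saturates them at every layer. Hence $f_{\theta^*,\lambda}\to$ the AFA acceptance indicator uniformly on all strings as $\lambda\to\infty$, the loss on $\mathcal{D}$ tends to its infimum $0$, and the learned $\beta$ values in this limit are exactly the OR/AND thresholds described in the statement.

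Step three is the interpretation of the learned parameters. Near $\theta^*$ the gradient with respect to $\beta_i$ is nonzero exactly when state $i$ lies on a path influencing misclassified strings. This justifies the claim that biases encode logical type and matrices encode adjacency. The main obstacle is Step two's induction. One issue is that the readout threshold $>0$ on a sum over $F$ needs its own margin; I will use bias $0.5$ there. The other is that the $\varepsilon$-closure fixed point interacts with AND gates, where the least fixed point must be reached from the current vector. I would handle this by fixing the iteration count $K=n$ and citing the closure procedure of~\cite{dhayalkar2025nfa}. Finally, I would note that exact recovery of the ground-truth automaton is supported empirically rather than proven, since the proposition does not assert identifiability.
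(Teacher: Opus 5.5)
Your proposal is correct and follows essentially the same route as the paper's proof. Both establish realizability by inserting the symbolic parameters of Theorem~\ref{THM:AFA_SIMULATION} into the relaxed network and sharpening the sigmoid, both observe via the chain rule that gradients of the time-shared parameters accumulate over all positions, and both leave actual convergence of gradient descent to the experiments. Your version is tighter than the paper's in two respects. First, the margin re-saturation induction makes the approximation quantitative and uniform in $L$, whereas the paper only asserts that $\sigma_\lambda$ tends to the step function. Second, taking $T^{*}\in\{0,1\}$ rather than the paper's $-\gamma$ on non-edges is actually needed, since negative weights on active non-predecessors would break the predecessor-counting argument behind the OR/AND thresholds.
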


\emph{Proof Sketch provided in Appendix~\ref{PROOF:PROP:LEARNABILITY}}. The significance of this result is that the network does not merely memorize patterns; it effectively performs a differentiable search over the space of boolean logic circuits. The parameter $\beta$ acts as a continuous ``slider'' between \textsc{OR} and \textsc{AND} logic. By allowing this parameter to evolve via gradient descent, the network can dynamically discover whether a specific hidden state should act as a union or an intersection of its incoming paths. This avoids the combinatorial complexity of discrete structure learning algorithms, replacing them with a smooth optimization landscape where the ``logic gate type'' is a learned feature of the neuron.

Relation to Prior Work:
This extends the learnability results for NFAs and Probabilistic Finite Automatas presented in \cite{dhayalkar2025nfa,dhayalkar2025symbolicfeedforwardnetworksprobabilistic}. In those works, learnability was restricted to identifying edges ($T^{(x)}$) because the activation threshold was fixed (implicitly representing only \textsc{OR} gates). Here, we show that by relaxing the threshold parameter $\beta$, we can simultaneously learn both the graph topology \textit{and} the logical semantics of the nodes. This suggests that the Logic-Gated architecture is a more general-purpose symbolic learner, capable of recovering a broader class of automata from data without prior knowledge of their structural type.

\begin{remark}[On ``Soft'' Logic Gates]
\label{rem:soft_logic}
During the training phase, the learned bias parameters $\beta$ may take continuous values rather than discrete integers. A bias value in an intermediate range, such as $\beta \in (0.5, 1.5)$, effectively implements a ``soft \textsc{AND}'' or a ``weighted majority'' gate, requiring a specific fraction of active inputs to trigger the activation. This continuous relaxation allows the gradient descent process to smoothly traverse the space of boolean logic gates, avoiding the combinatorial hardness associated with discrete structure learning. The network can thus gradually transition a node's behavior from Existential \textsc{OR} to Universal (\textsc{AND}) as evidence accumulates during training.
\end{remark}

\section{Experiments}
\label{sec:experiments}

\subsection{Experimental Setup}
\label{subsec:exp_setup}

To empirically validate the theoretical results established in Sections \ref{SEC:THEORETICAL_FRAMEWORK} and \ref{SEC:LEARNABILITY}, we design a controlled experimental framework that jointly simulates Alternating Finite Automata (AFAs) and their corresponding Logic-Gated Time-Shared Feedforward Networks (TS-FFNs). All experiments are implemented in PyTorch \cite{pytorch} and executed on an NVIDIA GeForce RTX 4060 GPU.

\paragraph{Synthetic AFA Generation.}
We evaluate our framework under two distinct complexity configurations:
\begin{itemize}
    \item \textbf{Configuration 1 (Baseline):} Each instance contains $n=20$ states with input alphabet $\Sigma=\{a,b,c,d,e,f\}$ (hence $|\Sigma|=6$). Input strings are uniformly sampled with lengths between 1 and 50.
    \item \textbf{Configuration 2 (High-Complexity):} Each instance contains $n=1,000$ states with a larger input alphabet $\Sigma=\{a,b,...,z,A,B,...Z,0,1,...,9\}$ (hence $|\Sigma|=62$). Input strings are uniformly sampled with lengths between 1 and 2,000.
\end{itemize}

For each random seed, we generate a fresh symbolic AFA. State types (Universal/Existential) are assigned randomly, with $q_i \in Q_{\exists}$ or $q_i \in Q_{\forall}$ with equal probability. Transition targets are sampled uniformly from the state space $Q$. To rigorously test the $\varepsilon$-closure mechanisms formalized in Theorem \ref{THM:AFA_SIMULATION}, we inject $\varepsilon$-transitions between random state pairs with probability $p_\varepsilon=0.3$. The start state is fixed as $q_0=0$, and the set of accepting states $F$ is a random subset of size $\lfloor n/2 \rfloor$.

\paragraph{Dataset Construction.}
For each generated AFA, we construct synthetic datasets by uniformly sampling input strings from $\Sigma^*$.
\begin{itemize}
    \item For Configuration 1, we generate 5,000 training samples and 500 test samples per seed.
    \item For Configuration 2, we generate 100,000 training samples and 10,000 test samples per seed.
\end{itemize}
Ground-truth acceptance labels $y \in \{0,1\}$ are computed by direct symbolic execution of the AFA, explicitly resolving $\varepsilon$-closures and recursive boolean conditions as defined in Definition\ref{DEF:AFA}.

\paragraph{Network Architecture and Training}
Each experiment instantiates a Logic-Gated TS-FFN corresponding to the target AFA configuration.
\begin{itemize}
    \item For Simulation Experiments (Section \ref{EXP:PROP:LOGICAL_AGGREGATION} to ~\ref{EXP:THM:EQUIVALENCE}): The network parameters (matrices $T^{(x)}$ and biases $\beta^{(x)}$) are initialized directly from the symbolic definition of the AFA according to Theorem \ref{THM:AFA_SIMULATION}.
    \item For Learnability Experiments (Section \ref{EXP:PROP:LEARNABILITY}): The parameters are initialized randomly using Kaiming initialization \cite{7410480}. The model is trained via gradient descent using the Adam optimizer~\cite{kingma2017adammethodstochasticoptimization} with a learning rate of $0.01$ and binary cross-entropy loss. To enable differentiability, we relax the binary threshold to a sigmoid activation during training, as described in Proposition \ref{PROP:LEARNABILITY}.
\end{itemize}
All reported results are averaged over five independent random seeds, with confidence intervals computed using Student's $t$-distribution.

\begin{table}[ht]
\centering
\caption*{Table I: Validation results for experiments across both configurations. Accuracy and confidence intervals computed over 5 random seeds using Student's $t$-distribution.}
\label{TABLE:RESULTS}
\vspace{2mm}
\setlength{\tabcolsep}{1pt} 
\renewcommand{\arraystretch}{1.1}  
\begin{tabular}{lcccc}
\textbf{Validation} & \textbf{Config} & \textbf{Mean} & \textbf{Std Dev} & \textbf{95\% CI} \\
\textbf{Experiment} & & \textbf{Accuracy} & & \\
\hline
~\ref{EXP:PROP:LOGICAL_AGGREGATION}: Proposition~\ref{PROP:LOGICAL_AGGREGATION} & 1 & 1.0000 & 0.0000 & (1.0000, 1.0000) \\
~\ref{EXP:PROP:LOGICAL_AGGREGATION}: Proposition~\ref{PROP:LOGICAL_AGGREGATION} & 2 & 1.0000 & 0.0000 & (1.0000, 1.0000) \\
\hline
~\ref{EXP:THM:AFA_SIMULATION}: Theorem~\ref{THM:AFA_SIMULATION} & 1 & 1.0000 & 0.0000 & (1.0000, 1.0000) \\
~\ref{EXP:THM:AFA_SIMULATION}: Theorem~\ref{THM:AFA_SIMULATION} & 2 & 1.0000 & 0.0000 & (1.0000, 1.0000) \\
\hline
~\ref{EXP:THM:EQUIVALENCE}: Theorem~\ref{THM:EQUIVALENCE} & 1 & 1.0000 & 0.0000 & (1.0000, 1.0000) \\
\qquad \ (forward direction) & & & & \\
~\ref{EXP:THM:EQUIVALENCE}: Theorem~\ref{THM:EQUIVALENCE} & 2 & 1.0000 & 0.0000 & (1.0000, 1.0000) \\
\qquad \ (forward direction) & & & & \\
~\ref{EXP:THM:EQUIVALENCE}: Theorem~\ref{THM:EQUIVALENCE} & 1 & 1.0000 & 0.0000 & (1.0000, 1.0000) \\
\qquad \ (backward direction) & & & & \\
~\ref{EXP:THM:EQUIVALENCE}: Theorem~\ref{THM:EQUIVALENCE} & 2 & 1.0000 & 0.0000 & (1.0000, 1.0000) \\
\qquad \ (backward direction) & & & & \\
\hline
~\ref{EXP:PROP:LEARNABILITY}: Proposition~\ref{PROP:LEARNABILITY} & 1 & 1.0000 & 0.0000 & (1.0000, 1.0000) \\
~\ref{EXP:PROP:LEARNABILITY}: Proposition~\ref{PROP:LEARNABILITY} & 2 & 0.9993 & 0.00013 & (0.99909, 0.99934) \\
\end{tabular}
\end{table}

\subsection{Validation of Proposition \ref{PROP:LOGICAL_AGGREGATION}: Logical Aggregation}
\label{EXP:PROP:LOGICAL_AGGREGATION}

To empirically validate Proposition \ref{PROP:LOGICAL_AGGREGATION}, we verify whether the state-dependent biased linear units correctly implement the required boolean logic (Existential/\textsc{OR} vs. Universal/\textsc{AND}). For every neuron in the generated networks across all seeds, we perform an exhaustive deterministic check: we calculate the neuron's activation for every possible number of active predecessors $k \in \{0, \dots, d_{in}\}$.

As summarized in Table~\hyperref[TABLE:RESULTS]{I}, the neural logic gates matched the symbolic ground truth with perfect accuracy (100\%) in both configurations. This empirically confirms that the bias parameter $\beta$ successfully modulates the linear threshold to switch between Existential ($\beta \approx 0.5$) and Universal ($\beta \approx d_{in} - 0.5$) modes, validating that boolean logic can be encoded directly within the linear weights and biases of the Forward TS-FFN.

\subsection{Validation of Theorem \ref{THM:AFA_SIMULATION}: Exact Simulation}
\label{EXP:THM:AFA_SIMULATION}

To validate Theorem \ref{THM:AFA_SIMULATION}, we evaluate whether the Logic-Gated TS-FFN can simulate the complete state evolution of a randomly generated AFA, including the complex dynamics of $\varepsilon$-closures and universal branching. For each random seed, we generate a distinct AFA and construct the corresponding neural network by directly setting its weights $T^{(x)}$ and biases $\beta^{(x)}$ according to the forward construction in Theorem \ref{THM:AFA_SIMULATION}, without any training.

We evaluate the network on the sampled test strings per seed and compare its binary acceptance decision against the ground-truth AFA execution. As shown in Table~\hyperref[TABLE:RESULTS]{I}, the network achieved 100\% simulation accuracy across all seeds and configurations. This empirically confirms that the Logic-Gated TS-FFN, equipped with the monotonic $\varepsilon$-closure operator, performs an exact simulation of the AFA's forward reachability semantics.

\subsection{Validation of Proposition \ref{PROP:SUCCINCTNESS}: Succinctness and Efficiency}
\label{EXP:PROP:SUCCINCTNESS}

We validate Proposition \ref{PROP:SUCCINCTNESS} regarding the parameter efficiency and representational capacity of the architecture. We compare the actual parameter count of the Logic-Gated TS-FFN against the derived theoretical bound $\mathcal{O}(kn^2)$. Furthermore, we quantify the \textit{succinctness ratio}, defined as the ratio between the state space size of an equivalent NFA ($2^n$) and the width of our network ($n$).

\begin{table}[h]
\centering
\caption*{Table II: Parameter Efficiency and Succinctness. The Logic-Gated TS-FFN maintains a parameter count of $\mathcal{O}(kn^2)$ while achieving an exponential succinctness ratio ($2^n/n$) compared to equivalent NFA representations.}
\setlength{\tabcolsep}{4pt} 
\label{TAB:SUCCINCTNESS_RESULTS}
\begin{tabular}{lcccc}
\textbf{Config} & \textbf{Width} & \textbf{NFA Size} & \textbf{Params} & \textbf{Succinctness} \\
& ($n$) & ($2^n$) & & \textbf{Ratio} \\
\midrule
1 & 20 & $1.05 \times 10^6$ & 2,960 & $5.24 \times 10^4$ \\
2 & 200 & $1.61 \times 10^{60}$ & 2,532,800 & $8.03 \times 10^{57}$ \\
\end{tabular}
\end{table}

The results, averaged over five seeds, are presented in Table~\hyperref[TAB:SUCCINCTNESS_RESULTS]{II}.
\begin{itemize}
    \item Parameter Efficiency: In both configurations, the actual parameter counts matched the theoretical expectations exactly (2960 parameters for Configuration 1, 2532800 parameters for Configuration 2, confirming the quadratic $\mathcal{O}(kn^2)$ scaling.
    \item Succinctness: The architecture demonstrates massive spatial efficiency. In Configuration 1, the network achieves a succinctness ratio of $5.24 \times 10^4$ times compared to a standard NFA representation. In Configuration 2, this ratio grows to over $8.03 \times 10^{57}$.
\end{itemize}
These results empirically verify that the Logic-Gated architecture achieves exponential state compression, representing complex regular languages with a network width linear in the AFA size rather than exponential.

\subsection{Validation of Theorem \ref{THM:EQUIVALENCE}: Bidirectional Equivalence}
\label{EXP:THM:EQUIVALENCE}

To empirically validate Theorem \ref{THM:EQUIVALENCE}, we test the structural isomorphism between the neural and symbolic representations in both directions.

Forward Direction:
For each random seed, we generate a random AFA with $\varepsilon$-transitions and construct the corresponding Logic-Gated TS-FFN using the forward simulation mapping defined in Theorem \ref{THM:AFA_SIMULATION}. We then evaluate the network on the test strings per seed. As shown in Table~\hyperref[TABLE:RESULTS]{I}, the network matched the ground-truth AFA decisions with 100\% accuracy across all trials. This confirms that the constructive mapping preserves the accepted language exactly, effectively embedding the automaton's logic into the network's forward recurrence.

Reverse Direction:
To validate the reverse implication, we instantiate a Logic-Gated TS-FFN (initialized from a random AFA to ensure validity) and apply the extraction procedure described in the proof of Theorem \ref{THM:EQUIVALENCE}. This procedure recovers the symbolic transition function $\delta$ and logical types $g$ directly from the network's trainable parameters:
\begin{itemize}
    \item The transition structure is recovered by thresholding the forward weight matrices $T^{(x)}$ and $T^{(\varepsilon)}$.
    \item The logical types (\textsc{AND}/\textsc{OR}) are inferred from the bias vectors $\beta^{(x)}$, distinguishing between existential ($\beta \approx 0.5$) and universal ($\beta \approx d_{in}-0.5$) aggregation.
\end{itemize}
We then execute the extracted AFA on the same test set. The results in Table~\hyperref[TABLE:RESULTS]{I} show perfect agreement between the original AFA's behavior and the extracted automaton. This confirms that the neural architecture is not merely an approximation but a faithful algebraic encoding, capable of being losslessly converted back into symbolic form.

\subsection{Validation of Proposition \ref{PROP:LEARNABILITY}: Gradient-Based Learnability}
\label{EXP:PROP:LEARNABILITY}

To validate Proposition \ref{PROP:LEARNABILITY}, we assess whether the Logic-Gated TS-FFN can learn an unknown target AFA solely from input-output examples $\{(x^{(i)}, y^{(i)})\}$. Unlike the simulation experiments where parameters were fixed, here we initialize the network with random weights (Kaiming uniform~\cite{7410480}) and train it via gradient descent to minimize binary cross-entropy loss. We train the model using the Adam optimizer~\cite{kingma2017adammethodstochasticoptimization} with a learning rate of $0.01$, a batch size of 64, and apply gradient clipping (norm 1.0) on datasets of 5,000 (Configuration 1) and 20,000 (Configuration 2) samples.

To enable differentiability for discrete boolean logic, we employ a three-phase curriculum learning strategy over 20 epochs. We relax the binary threshold activation to a logistic sigmoid function $\sigma_\lambda(z)$ and gradually increase the temperature $\lambda$ and maximum sequence length $L$ across phases ($L=10, \lambda=1.0$ for 5 epochs; $L=25, \lambda=2.5$ for 5 epochs; $L=50, \lambda=5.0$ for 10 epochs). Additionally, the bias parameters $\beta$ are initialized to $1.5$ to encourage sparse activation start states. This allows gradients to propagate through the unrolled computation graph, enabling the network to simultaneously learn the connectivity structure ($T$) and the logical types ($\beta$) of the hidden states from end-to-end supervision.

We evaluated the learning performance on datasets generated from hidden target AFAs for both configurations.
\begin{itemize}
    \item Configuration 1 (Baseline): The network achieved perfect test accuracy (100\%) across all random seeds, successfully recovering the exact acceptance language of the target AFA.
    \item Configuration 2 (High-Complexity): The network achieved near-perfect accuracy ($>99\%$), demonstrating that the gradient-based approach scales to larger state spaces and alphabets even when complex $\varepsilon$-closures and universal branching are present.
\end{itemize}

\begin{figure}[t]
    \centering
    \includegraphics[width=0.75\linewidth]{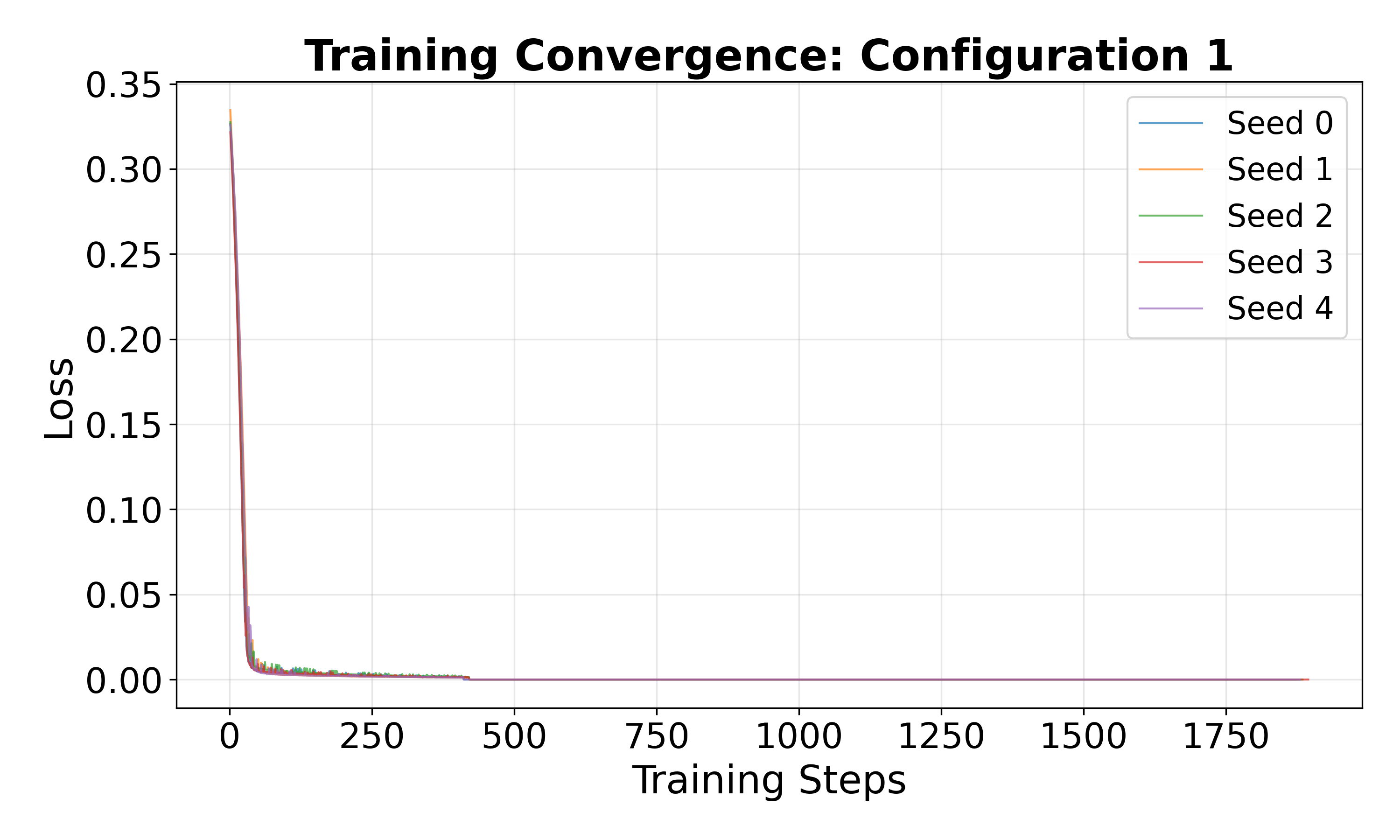}
    \caption*{Figure 2: Per-batch training loss across 5 seeds for Configuration 1.}
    \label{FIG:TRAIN_CONFIG1}
\end{figure}

\begin{figure}[t]
    \centering
    \includegraphics[width=0.75\linewidth]{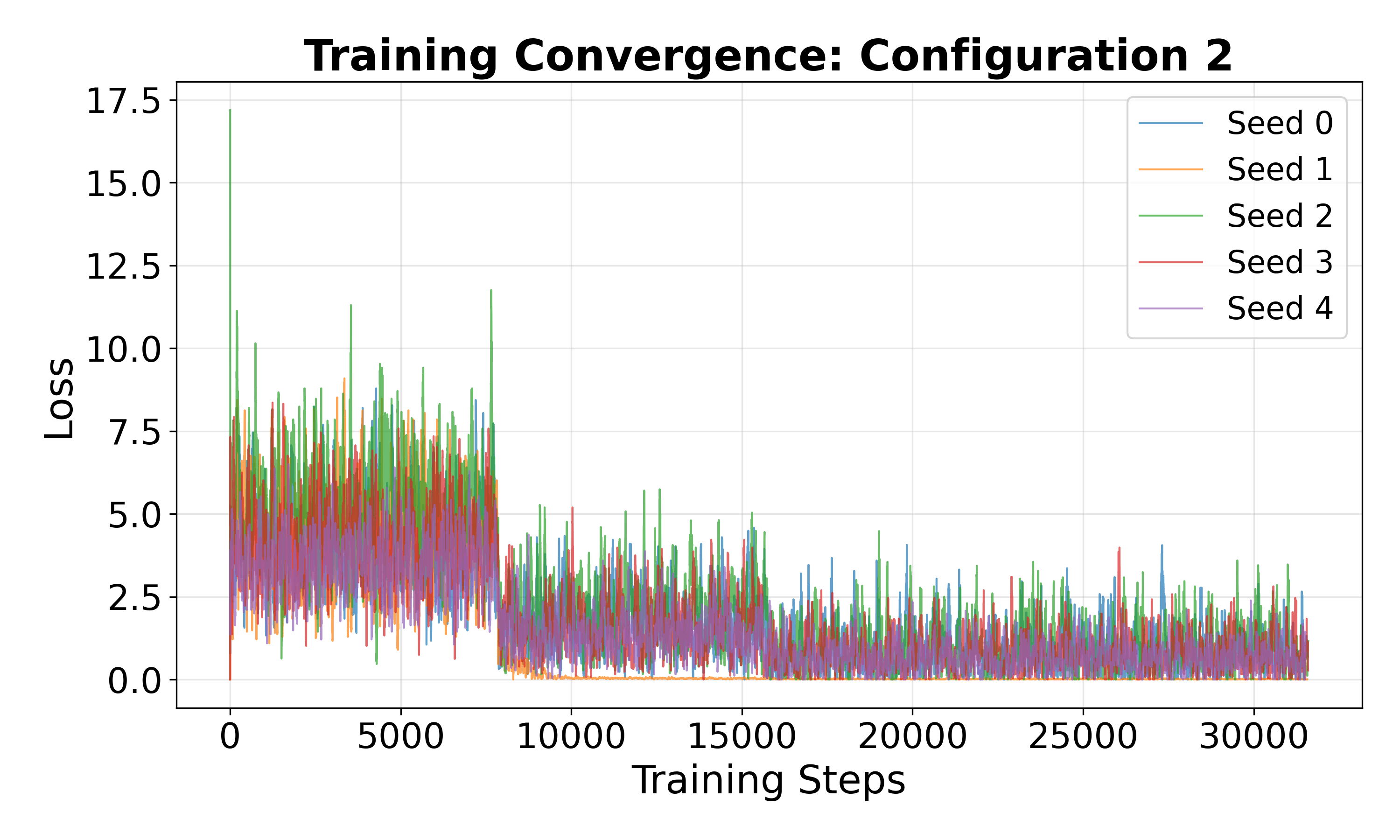}
    \caption*{Figure 3: Per-batch training loss across 5 seeds for Configuration 2.}
    \label{FIG:TRAIN_CONFIG2}
\end{figure}

Results also provided in Table~\hyperref[TABLE:RESULTS]{I}. Fig.~\hyperref[FIG:TRAIN_CONFIG1]{2} and Fig.~\hyperref[FIG:TRAIN_CONFIG2]{3} visualizes 
the reduction in training loss across different training steps (batches) for multiple random seeds in both configurations. The convergence demonstrates the learning dynamics, enabling the network to reliably recover the exact underlying automata structure from random initialization.

\section{Conclusion}
\label{sec:conclusion}

In this work, we have established a rigorous structural equivalence between neural networks and Alternating Finite Automata (AFAs), effectively bridging the gap between differentiable statistical learning and succinct symbolic reasoning. By introducing the Logic-Gated Time-Shared Feedforward Network, we demonstrated that a simple, learnable bias mechanism is sufficient to transform standard recurrent layers into differentiable boolean circuits capable of seamlessly switching between Existential (\textsc{OR}) and Universal (\textsc{AND}) aggregation.

Our theoretical analysis proves that this architecture inherits the exponential succinctness of AFAs, enabling a network of linear width $n$ to represent complex regular languages that would otherwise require hyper-exponential state spaces in classical deterministic or nondeterministic models. Crucially, we showed that this expressivity does not come at the cost of learnability. Through a continuous relaxation of the logic gates, our framework allows for the simultaneous discovery of automata topology and logical semantics directly from binary labels via standard gradient descent.

These findings challenge the view of neural networks as purely approximate pattern matchers, revealing their latent capacity for exact, hierarchical boolean logic. By grounding deep learning primitives in formal automata theory, we provide a pathway toward neuro-symbolic models that combine the efficiency of neural training with the interpretability and verifiability of formal logic. Future work will explore extending this paradigm to context-free languages through differentiable memory structures and applying these succinct, verifiable models to safety-critical domains where logical guarantees are paramount.

\section{Limitations}
\label{sec:limitations}

Our work establishes a rigorous and efficient framework for simulating and learning Alternating Finite Automata using Logic-Gated TS-FFNs. While we demonstrate robust learnability across diverse configurations, there remain exciting avenues for future research. For example, our gradient-based learning approach relies on a continuous relaxation of boolean logic. While our curriculum strategy effectively navigates this landscape to recover exact automata, future theoretical work could further characterize the optimization bounds of this relaxation, potentially leading to even faster convergence guarantees for larger-scale logical synthesis.

\section{Broader Impact}
\label{sec:broader_impact}
This work advances the theoretical understanding of how neural networks can implement symbolic computation, offering a rigorous bridge between automata theory and deep learning. The framework has potential implications for neural-symbolic reasoning, formal verification, program synthesis, natural language understanding, and interpretable AI systems. By enabling feedforward architectures to perform exact symbolic reasoning with transparent internal representations, this approach supports applications requiring correctness, verifiability, and alignment with formal logic. As a foundational theoretical contribution, the work does not pose foreseeable ethical or societal risks.

\subsection*{Acknowledgment: GenAI Use}
ChatGPT~\cite{chatgpt} has been used to improve the syntax and grammar of several sections in the manuscript. It has also been used to identify and review literature and relevant papers on deep learning in automata learning.

\bibliographystyle{plainnat}
\bibliography{neurips_2025}

@article{SIEGELMANN199177,
title = {Turing computability with neural nets},
journal = {Applied Mathematics Letters},
volume = {4},
number = {6},
pages = {77-80},
year = {1991},
issn = {0893-9659},
doi = {https://doi.org/10.1016/0893-9659(91)90080-F},
url = {https://www.sciencedirect.com/science/article/pii/089396599190080F},
author = {Hava T. Siegelmann and Eduardo D. Sontag}
}

@article{McCulloch1943,
  title={A logical calculus of the ideas immanent in nervous activity},
  author={McCulloch, Warren S. and Pitts, Walter},
  journal={The bulletin of mathematical biophysics},
  volume={5},
  number={4},
  pages={115--133},
  year={1943},
  publisher={Springer},
  doi={10.1007/BF02478259},
  url={https://doi.org/10.1007/BF02478259}
}

@article{OMLIN199641,
title = {Extraction of rules from discrete-time recurrent neural networks},
journal = {Neural Networks},
volume = {9},
number = {1},
pages = {41-52},
year = {1996},
issn = {0893-6080},
doi = {https://doi.org/10.1016/0893-6080(95)00086-0},
url = {https://www.sciencedirect.com/science/article/pii/0893608095000860},
author = {Christian W. Omlin and C.Lee Giles}
}

@ARTICLE{Giles1992,
  author={Giles, C. L. and Miller, C. B. and Chen, D. and Chen, H. H. and Sun, G. Z. and Lee, Y. C.},
  journal={Neural Computation}, 
  title={Learning and Extracting Finite State Automata with Second-Order Recurrent Neural Networks}, 
  year={1992},
  volume={4},
  number={3},
  pages={393-405},
  doi={10.1162/neco.1992.4.3.393}}

@incollection{TINO1998171,
title = {Chapter 6 - Finite State Machines and Recurrent Neural Networks — Automata and Dynamical Systems Approaches},
editor = {Omid Omidvar and Judith Dayhoff},
booktitle = {Neural Networks and Pattern Recognition},
publisher = {Academic Press},
address = {San Diego},
pages = {171-219},
year = {1998},
isbn = {978-0-12-526420-4},
doi = {https://doi.org/10.1016/B978-012526420-4/50007-0},
url = {https://www.sciencedirect.com/science/article/pii/B9780125264204500070},
author = {Peter Tiňo and Bill G. Horne and C. Lee Giles and Pete C. Collingwood}
}

@InProceedings{pmlr-v80-weiss18a,
  title = 	 {Extracting Automata from Recurrent Neural Networks Using Queries and Counterexamples},
  author =       {Weiss, Gail and Goldberg, Yoav and Yahav, Eran},
  booktitle = 	 {Proceedings of the 35th International Conference on Machine Learning},
  pages = 	 {5247--5256},
  year = 	 {2018},
  editor = 	 {Dy, Jennifer and Krause, Andreas},
  volume = 	 {80},
  series = 	 {Proceedings of Machine Learning Research},
  month = 	 {10-15 Jul},
  publisher =    {PMLR},
  url = 	 {https://proceedings.mlr.press/v80/weiss18a.html}
}

@book{hopcroft2001introduction,
author = {Hopcroft, John E. and Motwani, Rajeev and Ullman, Jeffrey D.},
title = {Introduction to Automata Theory, Languages, and Computation (3rd Edition)},
year = {2006},
isbn = {0321455363},
publisher = {Addison-Wesley Longman Publishing Co., Inc.},
address = {USA}
}

@inproceedings{bhattamishra-etal-2020-ability,
    title = "On the {A}bility and {L}imitations of {T}ransformers to {R}ecognize {F}ormal {L}anguages",
    author = "Bhattamishra, Satwik  and
      Ahuja, Kabir  and
      Goyal, Navin",
    editor = "Webber, Bonnie  and
      Cohn, Trevor  and
      He, Yulan  and
      Liu, Yang",
    booktitle = "Proceedings of the 2020 Conference on Empirical Methods in Natural Language Processing (EMNLP)",
    month = nov,
    year = "2020",
    address = "Online",
    publisher = "Association for Computational Linguistics",
    url = "https://aclanthology.org/2020.emnlp-main.576/",
    doi = "10.18653/v1/2020.emnlp-main.576",
    pages = "7096--7116"
}

@article{Hahn_2020,
   title={Theoretical Limitations of Self-Attention in Neural Sequence Models},
   volume={8},
   ISSN={2307-387X},
   url={http://dx.doi.org/10.1162/tacl_a_00306},
   DOI={10.1162/tacl_a_00306},
   journal={Transactions of the Association for Computational Linguistics},
   publisher={MIT Press - Journals},
   author={Hahn, Michael},
   year={2020},
   month=dec, pages={156–171} }

@inproceedings{merrill-etal-2020-formal,
    title = "A Formal Hierarchy of {RNN} Architectures",
    author = "Merrill, William  and
      Weiss, Gail  and
      Goldberg, Yoav  and
      Schwartz, Roy  and
      Smith, Noah A.  and
      Yahav, Eran",
    editor = "Jurafsky, Dan  and
      Chai, Joyce  and
      Schluter, Natalie  and
      Tetreault, Joel",
    booktitle = "Proceedings of the 58th Annual Meeting of the Association for Computational Linguistics",
    month = jul,
    year = "2020",
    address = "Online",
    publisher = "Association for Computational Linguistics",
    url = "https://aclanthology.org/2020.acl-main.43/",
    doi = "10.18653/v1/2020.acl-main.43",
    pages = "443--459"
}

@inproceedings{merrill-2019-sequential,
    title = "Sequential Neural Networks as Automata",
    author = "Merrill, William",
    editor = "Eisner, Jason  and
      Gall{\'e}, Matthias  and
      Heinz, Jeffrey  and
      Quattoni, Ariadna  and
      Rabusseau, Guillaume",
    booktitle = "Proceedings of the Workshop on Deep Learning and Formal Languages: Building Bridges",
    month = aug,
    year = "2019",
    address = "Florence",
    publisher = "Association for Computational Linguistics",
    url = "https://aclanthology.org/W19-3901/",
    doi = "10.18653/v1/W19-3901",
    pages = "1--13"
}

@inproceedings{Gu2021,
author = {Gu, Albert and Johnson, Isys and Goel, Karan and Saab, Khaled and Dao, Tri and Rudra, Atri and R\'{e}, Christopher},
title = {Combining recurrent, convolutional, and continuous-time models with linear state-space layers},
year = {2021},
isbn = {9781713845393},
publisher = {Curran Associates Inc.},
address = {Red Hook, NY, USA},
booktitle = {Proceedings of the 35th International Conference on Neural Information Processing Systems},
articleno = {44},
numpages = {14},
series = {NIPS '21}
}

@inproceedings{Dao2022,
author = {Dao, Tri and Fu, Daniel Y. and Ermon, Stefano and Rudra, Atri and R\'{e}, Christopher},
title = {FLASHATTENTION: fast and memory-efficient exact attention with IO-awareness},
year = {2022},
isbn = {9781713871088},
publisher = {Curran Associates Inc.},
address = {Red Hook, NY, USA},
booktitle = {Proceedings of the 36th International Conference on Neural Information Processing Systems},
articleno = {1189},
numpages = {16},
location = {New Orleans, LA, USA},
series = {NIPS '22}
}

@ARTICLE{Rabin1959,
  author={Rabin, M. O. and Scott, D.},
  journal={IBM Journal of Research and Development}, 
  title={Finite Automata and Their Decision Problems}, 
  year={1959},
  volume={3},
  number={2},
  pages={114-125},
  doi={10.1147/rd.32.0114}}

@ARTICLE{Casey1996,
  author={Casey, Mike},
  journal={Neural Computation}, 
  title={The Dynamics of Discrete-Time Computation, with Application to Recurrent Neural Networks and Finite State Machine Extraction}, 
  year={1996},
  volume={8},
  number={6},
  pages={1135-1178},
  doi={10.1162/neco.1996.8.6.1135}}

@INPROCEEDINGS{243123,
  author={Guo-Zheng Sun},
  booktitle={IEE Colloquium on Grammatical Inference: Theory, Applications and Alternatives}, 
  title={Learning context-free grammar with enhanced neural network pushdown automaton}, 
  year={1993},
  volume={},
  number={},
  pages={P6/1-P613},
  doi={}}

@ARTICLE{dhayalkar2025nfa,
  author={Dhayalkar, Sahil Rajesh},
  journal={IEEE Access}, 
  title={A Constructive Framework for Nondeterministic Automata via Time-Shared, Depth-Unrolled Feedforward Networks}, 
  year={2026},
  volume={14},
  number={},
  pages={8903-8917},
  doi={10.1109/ACCESS.2026.3654057}}

@misc{dhayalkar2025neuralnetworksuniversalfinitestate,
      title={Neural Networks as Universal Finite-State Machines: A Constructive Deterministic Finite Automaton Theory}, 
      author={Sahil Rajesh Dhayalkar},
      year={2025},
      eprint={2505.11694},
      archivePrefix={arXiv},
      primaryClass={cs.LG},
      url={https://arxiv.org/abs/2505.11694}, 
}

@misc{dhayalkar2025symbolicfeedforwardnetworksprobabilistic,
      title={Symbolic Feedforward Networks for Probabilistic Finite Automata: Exact Simulation and Learnability}, 
      author={Sahil Rajesh Dhayalkar},
      year={2025},
      eprint={2509.10034},
      archivePrefix={arXiv},
      primaryClass={cs.LG},
      url={https://arxiv.org/abs/2509.10034}, 
}

@article{Chandra1981,
author = {Chandra, Ashok K. and Kozen, Dexter C. and Stockmeyer, Larry J.},
title = {Alternation},
year = {1981},
issue_date = {Jan. 1981},
publisher = {Association for Computing Machinery},
address = {New York, NY, USA},
volume = {28},
number = {1},
issn = {0004-5411},
url = {https://doi.org/10.1145/322234.322243},
doi = {10.1145/322234.322243},
journal = {J. ACM},
month = jan,
pages = {114–133},
numpages = {20}
}

@inproceedings{Kozen1976,
author = {Kozen, Dexter},
title = {On parallelism in turing machines},
year = {1976},
publisher = {IEEE Computer Society},
address = {USA},
url = {https://doi.org/10.1109/SFCS.1976.20},
doi = {10.1109/SFCS.1976.20},
booktitle = {Proceedings of the 17th Annual Symposium on Foundations of Computer Science},
pages = {89–97},
numpages = {9},
series = {SFCS '76}
}

@article{Fellah01011990,
author = {A. Fellah and H. Jürgensen and S. Yu},
title = {Constructions for alternating finite automata},
journal = {International Journal of Computer Mathematics},
volume = {35},
number = {1-4},
pages = {117--132},
year = {1990},
publisher = {Taylor \& Francis},
doi = {10.1080/00207169008803893},
URL = { 
        https://doi.org/10.1080/00207169008803893
},
eprint = { 
        https://doi.org/10.1080/00207169008803893
}}

@article{Evans2018,
author = {Evans, Richard and Grefenstette, Edward},
title = {Learning explanatory rules from noisy data},
year = {2018},
issue_date = {January 2018},
publisher = {AI Access Foundation},
address = {El Segundo, CA, USA},
volume = {61},
number = {1},
issn = {1076-9757},
journal = {J. Artif. Int. Res.},
month = jan,
pages = {1–64},
numpages = {64}
}

@misc{dong2019neurallogicmachines,
      title={Neural Logic Machines}, 
      author={Honghua Dong and Jiayuan Mao and Tian Lin and Chong Wang and Lihong Li and Denny Zhou},
      year={2019},
      eprint={1904.11694},
      archivePrefix={arXiv},
      primaryClass={cs.AI},
      url={https://arxiv.org/abs/1904.11694}, 
}

@misc{serafini2016logictensornetworksdeep,
      title={Logic Tensor Networks: Deep Learning and Logical Reasoning from Data and Knowledge}, 
      author={Luciano Serafini and Artur d'Avila Garcez},
      year={2016},
      eprint={1606.04422},
      archivePrefix={arXiv},
      primaryClass={cs.AI},
      url={https://arxiv.org/abs/1606.04422}, 
}

@inproceedings{Donadello2017,
author = {Donadello, Ivan and Serafini, Luciano and Garcez, Artur D'Avila},
title = {Logic tensor networks for semantic image interpretation},
year = {2017},
isbn = {9780999241103},
publisher = {AAAI Press},
booktitle = {Proceedings of the 26th International Joint Conference on Artificial Intelligence},
pages = {1596–1602},
numpages = {7},
location = {Melbourne, Australia},
series = {IJCAI'17}
}

@misc{bengio2013estimatingpropagatinggradientsstochastic,
      title={Estimating or Propagating Gradients Through Stochastic Neurons for Conditional Computation}, 
      author={Yoshua Bengio and Nicholas Léonard and Aaron Courville},
      year={2013},
      eprint={1308.3432},
      archivePrefix={arXiv},
      primaryClass={cs.LG},
      url={https://arxiv.org/abs/1308.3432}, 
}

@misc{petersen2021differentiablesortingnetworksscalable,
      title={Differentiable Sorting Networks for Scalable Sorting and Ranking Supervision}, 
      author={Felix Petersen and Christian Borgelt and Hilde Kuehne and Oliver Deussen},
      year={2021},
      eprint={2105.04019},
      archivePrefix={arXiv},
      primaryClass={cs.LG},
      url={https://arxiv.org/abs/2105.04019}, 
}

@inproceedings{
liu2018darts,
title={{DARTS}: Differentiable Architecture Search},
author={Hanxiao Liu and Karen Simonyan and Yiming Yang},
booktitle={International Conference on Learning Representations},
year={2019},
url={https://openreview.net/forum?id=S1eYHoC5FX},
}

@inproceedings{
xie2018snas,
title={{SNAS}: stochastic neural architecture search},
author={Sirui Xie and Hehui Zheng and Chunxiao Liu and Liang Lin},
booktitle={International Conference on Learning Representations},
year={2019},
url={https://openreview.net/forum?id=rylqooRqK7},
}

@inproceedings{Vardi1996,
author = {Vardi, Moshe Y.},
title = {An automata-theoretic approach to linear temporal logic},
year = {1996},
isbn = {3540609156},
publisher = {Springer-Verlag},
address = {Berlin, Heidelberg},
booktitle = {Proceedings of the VIII Banff Higher Order Workshop Conference on Logics for Concurrency: Structure versus Automata: Structure versus Automata},
pages = {238–266},
numpages = {29},
location = {Banff, Canada}
}

@article{Muller1987,
author = {Muller, David, E. and Schupp, Paul, E.},
title = {Alternating automata on infinite trees},
year = {1987},
issue_date = {Oct. 1987},
publisher = {Elsevier Science Publishers Ltd.},
address = {GBR},
volume = {54},
number = {2–3},
issn = {0304-3975},
url = {https://doi.org/10.1016/0304-3975(87)90133-2},
doi = {10.1016/0304-3975(87)90133-2},
journal = {Theor. Comput. Sci.},
month = oct,
pages = {267–276},
numpages = {10}
}

@article{Kupferman2000,
author = {Kupferman, Orna and Vardi, Moshe Y.},
title = {Weak alternating automata are not that weak},
year = {2001},
issue_date = {July 2001},
publisher = {Association for Computing Machinery},
address = {New York, NY, USA},
volume = {2},
number = {3},
issn = {1529-3785},
url = {https://doi.org/10.1145/377978.377993},
doi = {10.1145/377978.377993},
journal = {ACM Trans. Comput. Logic},
month = jul,
pages = {408–429},
numpages = {22}
}

@inproceedings{Wang2018,
author = {Wang, Shiqi and Pei, Kexin and Whitehouse, Justin and Yang, Junfeng and Jana, Suman},
title = {Efficient formal safety analysis of neural networks},
year = {2018},
publisher = {Curran Associates Inc.},
address = {Red Hook, NY, USA},
booktitle = {Proceedings of the 32nd International Conference on Neural Information Processing Systems},
pages = {6369–6379},
numpages = {11},
location = {Montr\'{e}al, Canada},
series = {NIPS'18}
}

@inproceedings{
selsam2018learning,
title={Learning a {SAT} Solver from Single-Bit Supervision},
author={Daniel Selsam and Matthew Lamm and Benedikt B\"{u}nz and Percy Liang and Leonardo de Moura and David L. Dill},
booktitle={International Conference on Learning Representations},
year={2019},
url={https://openreview.net/forum?id=HJMC_iA5tm},
}

@misc{
amizadeh2020pdp,
title={{\{}PDP{\}}: A General Neural Framework for Learning {\{}SAT{\}} Solvers},
author={Saeed Amizadeh and Sergiy Matusevych and Markus Weimer},
year={2020},
url={https://openreview.net/forum?id=S1xaf6VFPB}
}

@INPROCEEDINGS{7410480,
  author={He, Kaiming and Zhang, Xiangyu and Ren, Shaoqing and Sun, Jian},
  booktitle={2015 IEEE International Conference on Computer Vision (ICCV)}, 
  title={Delving Deep into Rectifiers: Surpassing Human-Level Performance on ImageNet Classification}, 
  year={2015},
  volume={},
  number={},
  pages={1026-1034},
  doi={10.1109/ICCV.2015.123}}

@misc{kingma2017adammethodstochasticoptimization,
      title={Adam: A Method for Stochastic Optimization}, 
      author={Diederik P. Kingma and Jimmy Ba},
      year={2017},
      eprint={1412.6980},
      archivePrefix={arXiv},
      primaryClass={cs.LG},
      url={https://arxiv.org/abs/1412.6980}, 
}

@inproceedings{pytorch,
  author    = {Paszke, Adam and Gross, Sam and Massa, Francisco and Lerer, Adam and Bradbury, James and Chanan, Gregory and Killeen, Trevor and Lin, Zeming and Gimelshein, Natalia and Antiga, Luca and Desmaison, Alban and Kopf, Andreas and Yang, Edward and DeVito, Zachary and Raison, Martin and Tejani, Alykhan and Chilamkurthy, Sasank and Steiner, Benoit and Fang, Lu and Bai, Junjie and Chintala, Soumith},
  title     = {PyTorch: An Imperative Style, High-Performance Deep Learning Library},
  booktitle = {Advances in Neural Information Processing Systems},
  editor    = {Wallach, H. and Larochelle, H. and Beygelzimer, A. and d'Alch{\'{e}}{-}Buc, F. and Fox, E. and Garnett, R.},
  volume    = {32},
  pages     = {8024--8035},
  year      = {2019},
  url       = {https://proceedings.neurips.cc/paper_files/paper/2019/file/bdbca288fee7f92f2bfa9f7012727740-Paper.pdf}
}

@misc{chatgpt,
  author = {OpenAI},
  title = {ChatGPT},
  year = {2023},
  month = {March},
  note = {Large language model},
  url = {https://chatgpt.com/}
}

\appendix
\section{Proofs of Theoretical Results}
\label{app:proofs}

\subsection{Proof Sketch of Proposition~\ref{PROP:LOGICAL_AGGREGATION}: Logical Aggregation via Biased Linear Units}
\label{PROOF:PROP:LOGICAL_AGGREGATION}

\begin{proof}
Let $q_i \in Q$ be a state and $x \in \Sigma \cup \{\varepsilon\}$ be a transition symbol. Let $P_i$ be the set of predecessors (incoming states) for $q_i$ on symbol $x$, and let $d_i = |P_i|$ be the in-degree. Let $v_{prev} \in \{0,1\}^n$ be the boolean activation vector of the previous step.

We aim to show that the linear threshold operation:
\begin{equation}
    v_{curr}[i] = \sigma \left( \sum_{j=1}^n T_{ij}^{(x)} [v_{prev}]_j - \beta_i^{(x)} \right)
\end{equation}
correctly aggregates the incoming boolean signals according to the specified logic, where $\sigma(z) = \mathbf{1}_{[z \ge 0]}$.

First, consider the linear term inside the activation function. By definition of the forward transition matrix $T^{(x)}$, $T_{ij}^{(x)} = 1$ if and only if there is a transition $q_j \xrightarrow{x} q_i$, and 0 otherwise. Thus, the summation term simplifies to a count of active predecessors:
\begin{equation}
    Z_i = \sum_{j=1}^n T_{ij}^{(x)} [v_{prev}]_j = \sum_{q_j \in P_i} [v_{prev}]_j = | \{ q_j \in P_i \mid q_j \text{ is active} \} |
\end{equation}
This integer $Z_i$ represents the number of incoming paths that are active. We analyze the two logical cases:

\paragraph{Case 1: Existential Aggregation (\textsc{OR}):}
To implement logical \textsc{OR} (Existential reachability), the state $q_i$ should activate if at least one predecessor is active. The condition is:
\begin{equation}
    Active(q_i) \iff Z_i \ge 1
\end{equation}
We set the bias $\beta_i^{(x)} = 0.5$. The neuron's activation condition becomes:
\begin{equation}
    Z_i - 0.5 \ge 0 \implies Z_i \ge 0.5
\end{equation}
Since $Z_i$ is an integer, $Z_i \ge 0.5$ is strictly equivalent to $Z_i \ge 1$. Thus, the neuron fires if any incoming path is active.

\paragraph{Case 2: Universal Aggregation (\textsc{AND}):}
To implement logical \textsc{AND} (Universal synchronization), the state $q_i$ should activate only if all predecessors are active. The condition is:
\begin{equation}
    Active(q_i) \iff Z_i = d_i
\end{equation}
We set the bias $\beta_i^{(x)} = d_i - 0.5$. The neuron's activation condition becomes:
\begin{equation}
    Z_i - (d_i - 0.5) \ge 0 \implies Z_i \ge d_i - 0.5
\end{equation}
Since $Z_i$ is bounded by $d_i$ (it cannot exceed the number of incoming edges), the condition $Z_i \ge d_i - 0.5$ can only be satisfied if $Z_i = d_i$. Thus, the neuron fires only if all incoming paths are active.

\paragraph{Conclusion:}
In both cases, the linear threshold operation with the connectivity-dependent bias $\beta_i^{(x)}$ correctly implements the desired boolean aggregation logic (\textsc{OR} vs. \textsc{AND}) over the input signals.
\end{proof}

\subsection{Proof Sketch of Theorem~\ref{THM:AFA_SIMULATION}: Simulation of AFAs by Logic-Gated TS-FFNs}
\label{PROOF:THM:AFA_SIMULATION}

\begin{proof}
Let $\mathcal{A} = (Q, \Sigma \cup \{\varepsilon\}, \delta, q_0, g, F)$ be an Alternating Finite Automaton with $n$ states. Let $x = x_1 x_2 \dots x_L$ be an input string. We construct a Logic-Gated TS-FFN $f_\theta$ with parameters defined as in Proposition \ref{PROP:LOGICAL_AGGREGATION}. We identify the boolean activation status of the states with a vector $v \in \{0,1\}^n$.

We prove by induction on the time step $t$ (from $0$ up to $L$) that the network state $v_t$ correctly encodes the set of active states after processing the prefix $x_1 \dots x_t$.

\paragraph{Base Case ($t=0$):}
At step $t=0$, no input symbols have been consumed. The active states are the start state $q_0$ and any states reachable from $q_0$ via $\varepsilon$-transitions.
The network initializes $v_0 = \mathcal{C}_\varepsilon(e_{q_0})$.
\begin{itemize}
    \item The vector $e_{q_0}$ correctly marks the start state.
    \item The closure operator $\mathcal{C}_\varepsilon$ iteratively applies the update $u \leftarrow \sigma(T^{(\varepsilon)} u - \beta^{(\varepsilon)})$ as depicted in~\cite{dhayalkar2025nfa}. As shown in Proposition \ref{PROP:LOGICAL_AGGREGATION}, each iteration propagates activation forward across one layer of $\varepsilon$-transitions, aggregating signals according to the logical type ($g(q)$) of each state (e.g., waiting for all incoming $\varepsilon$-paths if configured as \textsc{AND}).
    \item Since the state space is finite, this fixed-point iteration converges to the exact set of initially active states.
\end{itemize}
Thus, $v_0$ correctly encodes the initial configuration.

\paragraph{Inductive Step:}
Assume that at step $t-1$, the vector $v_{t-1}$ correctly encodes the active states after processing the prefix $x_1 \dots x_{t-1}$. We must show that $v_t$ correctly encodes the active states after processing $x_t$.
The network update is given by Equation (5):
\begin{equation}
    v_t = \mathcal{C}_\varepsilon \left( \sigma\left(T^{(x_t)} v_{t-1} - \beta^{(x_t)}\right) \right)
\end{equation}
Let $v'_{t} = \sigma(T^{(x_t)} v_{t-1} - \beta^{(x_t)})$ be the intermediate value before $\varepsilon$-closure.
\begin{enumerate}
    \item Symbol Transition: By Proposition \ref{PROP:LOGICAL_AGGREGATION}, the operation $\sigma(T^{(x_t)} v_{t-1} - \beta^{(x_t)})$ correctly computes the boolean activation of each state $q_i$ based on its predecessors on symbol $x_t$.
    \begin{itemize}
        \item If configured as \textsc{OR} ($\beta \approx 0.5$), $q_i$ activates if any predecessor is active.
        \item If configured as \textsc{AND} ($\beta \approx d_{in}-0.5$), $q_i$ activates only if all predecessors are active.
    \end{itemize}
    Thus, $v'_{t}$ encodes the states activated by the transition on $x_t$.

    \item Epsilon Closure: The subsequent application of $\mathcal{C}_\varepsilon$~\cite{dhayalkar2025nfa} accounts for states that become active via $\varepsilon$-transitions immediately following the consumption of $x_t$. The logic is identical to the base case: the closure propagates the activation of $v'_{t}$ downstream across $\varepsilon$-edges according to the \textsc{AND}/\textsc{OR} aggregation semantics.
\end{enumerate}
Therefore, $v_t$ correctly combines symbol consumption and $\varepsilon$-transitions, accurately reflecting the recursive definition of state activation.

\paragraph{Conclusion:}
By induction, $v_L$ correctly encodes the active states after processing the full string $x_1 \dots x_L$. The network output $f_\theta(x) = \mathbf{1}_{[v_L \cdot \mathbf{1}_F^T > 0]}$ checks if any final state is active. Thus, the network simulates the forward dynamics of the automaton.
\end{proof}

\subsection{Proof Sketch of Proposition~\ref{PROP:SUCCINCTNESS}: Succinctness and Parameter Efficiency}
\label{PROOF:PROP:SUCCINCTNESS}

\begin{proof}
Let $\mathcal{A} = (Q, \Sigma \cup \{\varepsilon\}, \delta, q_0, g, F)$ be an Alternating Finite Automaton with $|Q|=n$ states and an input alphabet size $|\Sigma|=k$. We construct a Logic-Gated TS-FFN $f_\theta$ that simulates $\mathcal{A}$ according to Theorem \ref{THM:AFA_SIMULATION}. We analyze the two claims separately.

\paragraph{1. Parameter Efficiency:}
The Logic-Gated TS-FFN is parameterized solely by the time-shared matrices and bias vectors used in the layer-wise update rule.
\begin{itemize}
    \item Transition Matrices: For each symbol $x \in \Sigma$, there is a transition matrix $T^{(x)} \in \mathbb{R}^{n \times n}$. Additionally, there is one shared $\varepsilon$-transition matrix $T^{(\varepsilon)} \in \mathbb{R}^{n \times n}$. The total number of transition matrices is $k+1$.
    \item Bias Vectors: For each symbol $x \in \Sigma$, there is a bias vector $\beta^{(x)} \in \mathbb{R}^n$. Additionally, there is one shared $\varepsilon$-bias vector $\beta^{(\varepsilon)} \in \mathbb{R}^n$. The total number of bias vectors is $k+1$.
    \item Initialization and Readout: The initialization requires a vector $e_{q_0} \in \{0,1\}^n$ (fixed or learnable) and the readout requires a mask for $F$ (fixed or learnable). These contribute $\mathcal{O}(n)$ parameters.
\end{itemize}
The total parameter count $N_{params}$ is given by:
\begin{equation}
    N_{params} = \underbrace{(k+1) \cdot n^2}_{\text{Transition Matrices}} + \underbrace{(k+1) \cdot n}_{\text{Bias Vectors}} + \underbrace{\mathcal{O}(n)}_{\text{Boundary}}
\end{equation}
Simplifying, $N_{params} = \mathcal{O}((k+1)(n^2 + n)) = \mathcal{O}(k n^2)$.
Crucially, these parameters are shared across all time steps $t=1, \dots, L$ of the unrolled computation graph. Therefore, the parameter count is strictly independent of the input string length $L$.

\paragraph{2. Succinctness:}
Succinctness is measured by comparing the network width (number of neurons per layer) required to recognize a language $\mathcal{L}$ against standard baselines.
\begin{itemize}
    \item Standard Baseline (NFA Simulation): A standard feedforward network (like a vanilla TS-FFN or RNN) simulating a regular language typically corresponds to an NFA simulation. It is a well-known result in automata theory that converting an $n$-state AFA to an equivalent NFA may require $2^n$ states (via the subset construction). To simulate this NFA directly, a standard network would require a width of $2^n$ neurons to represent the $2^n$ distinct NFA states using a one-hot or distributed reachability encoding.
    
    \item Logic-Gated TS-FFN: Our architecture uses a width of exactly $n$ neurons. The state of the network at any depth $t$ is a vector $v_t \in \{0,1\}^n$. This vector represents the boolean activation configuration of the $n$ AFA states.
    \item Mechanism: By utilizing the forward-unrolling mechanism and the logic-gated biases $\beta$, the network computes the boolean function $f: \{0,1\}^n \to \{0,1\}^n$ required to update this configuration directly. This allows the network to perform the equivalent of the NFA's powerset transition (tracking the boolean combination of states) within a layer of width $n$.
\end{itemize}
Thus, the Logic-Gated TS-FFN achieves an exponential reduction in spatial complexity (network width) relative to a standard NFA-simulating network for the same class of languages.
\end{proof}

\subsection{Proof Sketch of Theorem~\ref{THM:EQUIVALENCE}: Equivalence between Logic-Gated TS-FFNs and AFAs}
\label{PROOF:THM:EQUIVALENCE}

\begin{proof}
The equivalence holds constructively in both directions.

\paragraph{Forward Direction:} Given a regular language $\mathcal{L}$, let $\mathcal{A} = (Q, \Sigma \cup \{\varepsilon\}, \delta, q_0, g, F)$ be an AFA recognizing $\mathcal{L}$ with $|Q|=n$. Construct an LG-TS-FFN $f_{\theta}$ as follows:
\begin{itemize}
    \item Represent the boolean activation configuration of states at time $t$ as a vector $s_t$, initialized with the one-hot start vector $e_{q_0}$.
    \item Compute the initial $\varepsilon$-closure: $s_0 = \mathcal{C}_\varepsilon(e_{q_0})$.
    \item For each symbol $x_t$ in the input string, compute the update:
    \begin{equation}
        s_t = \mathcal{C}_\varepsilon \left( \sigma\left(T^{(x_t)} s_{t-1} - \beta^{(x_t)}\right) \right)
    \end{equation}
    where parameters are defined explicitly by the AFA structure:
    \begin{itemize}
        \item $T_{ij}^{(x)} = 1 \iff q_i \in \delta(q_j, x)$.
        \item $\beta_i^{(x)} = 0.5$ if $g(q_i) = \vee$ (Existential), and $\beta_i^{(x)} = d_{in}-0.5$ if $g(q_i) = \wedge$ (Universal).
    \end{itemize}
    \item Accept if $\langle s_L, \mathbf{1}_F \rangle > 0$, where $\mathbf{1}_F$ indicates the accepting states.
\end{itemize}
Because every operation corresponds exactly to the AFA's transition function and logic gates, the network accepts precisely those strings in $\mathcal{L}$.

\paragraph{Reverse Direction:} Given an LG-TS-FFN $f_{\theta}$ constructed by this symbolic procedure:
\begin{itemize}
    \item Each matrix $T^{(x)} \in \{0,1\}^{n \times n}$ defines a transition relation: $q_i \in \delta(q_j, x)$ iff $T_{ij}^{(x)}=1$.
    \item Each bias vector $\beta^{(x)}$ defines the logical types $g$:
    \begin{equation}
        g(q_i) = \begin{cases} \vee & \text{if } \beta_i^{(x)} \approx 0.5 \\ \wedge & \text{if } \beta_i^{(x)} \approx d_{in} - 0.5 \end{cases}
    \end{equation}
    \item The initial state corresponds to the vector $e_{q_0}$, and final states are encoded by $\mathbf{1}_F$.
\end{itemize}
Therefore, the LG-TS-FFN precisely simulates an AFA $\mathcal{A}'$ whose accepted language satisfies $f_{\theta}(x) = 1 \iff x \in \mathcal{L}(\mathcal{A}')$.

\paragraph{Conclusion:} The equivalence between LG-TS-FFNs and AFAs holds constructively in both directions: any AFA can be simulated exactly by an LG-TS-FFN, and any such LG-TS-FFN corresponds to an AFA recognizing the same language.
\end{proof}

\subsection{Proof Sketch of Proposition~\ref{PROP:LEARNABILITY}: Gradient-Based Learnability of Logic Gates}
\label{PROOF:PROP:LEARNABILITY}
\begin{proof}
Let $\mathcal{A} = (Q, \Sigma \cup \{\varepsilon\}, \delta, q_0, g, F)$ be the target Alternating Finite Automaton with $n$ states. Let $\mathcal{D} = \{(x^{(i)}, y^{(i)})\}_{i=1}^m$ be a dataset of input strings $x^{(i)} \in \Sigma^*$ and binary labels $y^{(i)} \in \{0,1\}$ indicating acceptance by $\mathcal{A}$.

We consider a relaxed Logic-Gated TS-FFN $f_\theta$ parameterized by $\theta = \{T^{(x)}, \beta^{(x)}\}_{x \in \Sigma \cup \{\varepsilon\}}$. The relaxation replaces the binary step activation with the logistic sigmoid function $\sigma_\lambda(z) = (1 + e^{-\lambda z})^{-1}$ and allows parameters to take real values.

We prove learnability in two parts: (1) Realizability (existence of a solution) and (2) Optimization Dynamics.

\paragraph{1. Constructive Realizability (Existence of $\theta^*$):}
By Theorem \ref{THM:AFA_SIMULATION}, there exists a discrete parameter configuration that exactly simulates $\mathcal{A}$ in the forward direction. We construct a corresponding continuous parameter set $\theta^*$ for the relaxed network:
\begin{itemize}
    \item Transitions: For every $x \in \Sigma \cup \{\varepsilon\}$, set $T^{*(x)}_{ij} = \gamma$ if $q_i \in \delta(q_j, x)$ (encoding a transition $q_j \to q_i$) and $-\gamma$ otherwise, for a sufficiently large scaling factor $\gamma > 0$.
    \item Biases: Set $\beta^{*(x)}_i$ such that the decision boundary respects the logic gate type of state $q_i$:
    \begin{itemize}
        \item If $g(q_i) = \vee$ (Existential/\textsc{OR}), set $\beta^{*(x)}_i = 0.5 \gamma$. The condition $\sum_j T_{ij} v_j - \beta > 0$ becomes $\gamma (\sum v_j - 0.5) > 0$, equivalent to $\sum v_j \ge 1$ (at least one active predecessor).
        \item If $g(q_i) = \wedge$ (Universal/\textsc{AND}), set $\beta^{*(x)}_i = (d_{in}(q_i) - 0.5) \gamma$, where $d_{in}$ is the number of incoming transitions on $x$. The condition becomes $\gamma (\sum v_j - (d_{in} - 0.5)) > 0$, equivalent to $\sum v_j = d_{in}$ (all predecessors active).
    \end{itemize}
\end{itemize}
As the temperature parameter $\lambda \to \infty$ (or $\gamma \to \infty$), the sigmoid function $\sigma_\lambda$ approaches the step function. Thus, for sufficiently large $\gamma$, the relaxed network $f_{\theta^*}(x)$ outputs values arbitrarily close to the discrete truth values $y^{(i)}$. Consequently, the empirical risk $\mathcal{L}(\theta^*) \approx 0$ is a global minimum. This proves the problem is realizable within the hypothesis space.

\paragraph{2. Gradient-Based Optimization Dynamics:}
We define the learning objective as minimizing the binary cross-entropy loss:
\begin{equation}
    l(\theta) = - \frac{1}{m} \sum_{i=1}^m \left[ y^{(i)} \log f_\theta(x^{(i)}) + (1 - y^{(i)}) \log (1 - f_\theta(x^{(i)})) \right]
\end{equation}
The gradients with respect to the parameters are computed via backpropagation. Consider the gradient for a bias parameter $\beta_i^{(x)}$ at a specific layer $t$. By the chain rule:
\begin{equation}
    \frac{\partial l}{\partial \beta_i^{(x)}} = \frac{\partial l}{\partial v_{t}[i]} \cdot \frac{\partial v_{t}[i]}{\partial z_{t}[i]} \cdot (-1)
\end{equation}
where $z_t$ is the pre-activation sum. The term $\frac{\partial l}{\partial v_{t}[i]}$ represents the error signal propagated backwards from the output at step $L$.
\begin{itemize}
    \item If the network falsely accepts a negative example ($y^{(i)}=0, f_\theta \approx 1$), the gradient descent update will increase $\beta_i^{(x)}$ (to suppress activation). This pushes the logic gate towards universal (\textsc{AND}) behavior, effectively raising the threshold.
    \item If the network falsely rejects a positive example ($y^{(i)}=1, f_\theta \approx 0$), the update will decrease $\beta_i^{(x)}$ (to encourage activation). This pushes the logic gate towards existential (\textsc{OR}) behavior, lowering the threshold.
\end{itemize}
Furthermore, the parameters are time-shared. The total gradient for a parameter is the sum over all time steps $t$: $\nabla_{\theta} l = \sum_{t=1}^L \nabla_{\theta}^{(t)} l$. This aggregation ensures that evidence from the entire sequence length accumulates constructively, mitigating the vanishing gradient problem for the symbolic structure.

\paragraph{Conclusion:}
Since an exact solution $\theta^*$ exists, and the gradient descent updates move parameters in directions consistent with correcting logical errors (adjusting thresholds for \textsc{AND}/\textsc{OR} behavior and weights for connectivity), the TS-FFN is empirically learnable from labeled data.
\end{proof}

\subsection{Example: Encoding input strings within the Logic-Gated TS-FFN}
\label{EXAMPLE}

Consider an input string $x = aba$ and an automaton over $\Sigma = \{a, b\}$. The unrolled Logic-Gated TS-FFN computes:
\begin{align*}
    v_0 &= \mathcal{C}_\varepsilon(e_{q_0}), \\
    v_1 &= \mathcal{C}_\varepsilon \left( \sigma \left( T^{(a)} v_0 - \beta^{(a)} \right) \right), \\
    v_2 &= \mathcal{C}_\varepsilon \left( \sigma \left( T^{(b)} v_1 - \beta^{(b)} \right) \right), \\
    v_3 &= \mathcal{C}_\varepsilon \left( \sigma \left( T^{(a)} v_2 - \beta^{(a)} \right) \right).
\end{align*}
Finally, acceptance is determined via a dot product with the indicator vector of accepting states:
\begin{equation*}
    \text{Accept}(x) = \text{True} \iff \langle v_3, \mathbf{1}_F \rangle > 0.
\end{equation*}
This explicit separation of control flow (symbol sequence) from computation is fundamental to the symbolic fidelity and modularity of the TS-FFN framework. It allows the model to retain both exact automata semantics and differentiable compositional structure.
\end{document}